\newtheorem{observation}[theorem]{Observation}
\begin{document}
\title{Black Hole Search in Dynamic Cactus Graph}
%
\author{Adri Bhattacharya\inst{1}\thanks{Supported by CSIR, Govt. of India, Grant Number: 09/731(0178)/2020-EMR-I} \and Giuseppe F. Italiano\inst{2} \and
Partha Sarathi Mandal\inst{1}\thanks{This work was done while Partha Sarathi Mandal was in the position of Visiting Professor at Luiss University, Rome, Italy} }
\authorrunning{Bhattacharya et al.}

\institute{Indian Institute of Technology Guwahati, India\\ \email{\{a.bhattacharya, psm\}@iitg.ac.in} \and
 Luiss University, Rome, Italy\\
\email{gitaliano@luiss.it}}
\maketitle              
\begin{abstract}

We study the problem of black hole search by a set of mobile agents, where the underlying graph is a dynamic cactus. 
A black hole is a dangerous vertex in the graph that eliminates any visiting agent without leaving any trace behind. Key parameters that dictate the complexity of finding the black hole include: the number of agents required (termed as \textit{size}), the number of moves performed by the agents in order to determine the black hole location (termed as \textit{move}) and the \textit{time} (or round) taken to terminate.
This problem has already been studied where the underlying graph is a dynamic ring \cite{di2021black}. In this paper, we extend the same problem to a dynamic cactus. We introduce two categories of dynamicity, but still the underlying graph needs to be connected: first, we examine the scenario where, at most, one dynamic edge can disappear or reappear at any round. Secondly, we consider the problem for at most $k$ dynamic edges. In both scenarios, we establish lower and upper bounds for the necessary number of agents, moves and rounds.

\keywords{Black hole search, Dynamic cactus graph\and Dynamic networks \and Time-varying graphs \and Mobile
agents computing}
\end{abstract}
\section{Introduction}
We study the black hole search problem (also termed as BHS) in a dynamic cactus graph, where edges can appear and disappear i.e., goes missing over time so the underlying graph remains connected. More precisely, the network is a synchronous cactus graph where one of the vertices (or nodes) is a malicious node that eliminates visiting agents without any trace of their existence upon arrival on such nodes; that node is termed as \textit{Black Hole} \cite{dobrev2006searching}. This scenario frequently arises within networked systems, particularly in situations requiring the safeguarding of agents from potential host attacks. Presently, apart from the research paper concerning ring networks \cite{di2021black}, there exists limited knowledge regarding this phenomenon when the network exhibits dynamic characteristics. Therefore, the focus of our study is to expand upon our findings in this context.

In our investigation, we consider a collection of mobile agents, all of whom execute the same algorithm synchronously. Initially, these agents are positioned at a node that is confirmed to be free from any black hole threat; these nodes are referred to as `safe nodes'. The primary objective is to efficiently determine the location of the black hole within the network in the shortest possible time while ensuring at least one agent survives and possesses knowledge of the black hole's whereabouts.\\

\noindent{\bf Related Work.} The black hole search problem is well-studied in varying underlying topologies such as rings, grids, torus, etc. This problem is first introduced by Dobrev et al. \cite{dobrev2006searching}, where they solved in static arbitrary topology. Further, they give tight bounds on the number of agents and provide cost complexity of the size-optimal solution. After this seminal paper, there has been a plethora of work done in this domain under different graph classes such as trees \cite{czyzowicz2007searching}, rings \cite{balamohan2011time,dobrev2007scattered,dobrev2008using}, tori \cite{chalopin2011black,markou2012black} and in graphs of arbitrary topology \cite{czyzowicz2006complexity,dobrev2006searching}. Moreover, two variations of this problem are mainly studied in the literature. First, when the agents are initially co-located \cite{czyzowicz2007searching} and second, when the agents are initially scattered \cite{chalopin2011black,dobrev2007scattered} in the underlying network. Now, in all the above literature, the study has been performed when the underlying graph is static.  

While most of the study has been done on static networks, very little literature is known about black hole search especially in dynamic graphs. The study of mobile agents in dynamic graphs is a fairly new area of research. Previously, the problem of exploration has been studied in dynamic rings \cite{di2020distributed,gotoh2020dynamic,mandal2020live}, torus \cite{gotoh2021exploration}, cactuses \cite{ilcinkas2021exploration} and in general graphs \cite{gotoh2021exploratio}. In addition to exploration, there are other problems related to mobile agents studied in dynamic networks such as, gathering \cite{di2020gathering}, compacting of oblivious agents \cite{das2021compacting}, dispersion of mobile agents \cite{agarwalla2018deterministic,kshemkalyani2020efficient}. Further, Flocchini et al. \cite{flocchini2012searching} studied the black hole search problem in a different class of dynamic graphs, defined as periodically varying graphs, they showed the minimum number of agents required to solve this problem is $\gamma+1$, where $\gamma$ is the minimum number of carrier stops at black holes. Di Luna et al. \cite{di2021black}, studied the black hole search problem in a dynamic ring, where they established lower bounds and give size-optimal algorithms in terms of agents, moves and rounds in two communication models. In this paper, we aim to solve a similar problem, where we want to determine the position of a black hole with the least number of agents, but in our case, we have considered the underlying topology to be a dynamic cactus graph.\\

\noindent{\bf Our Contributions.} We obtain the following results when the cactus graph has at most one dynamic edge at any round.\\
\begin{itemize}
    \item Establish the impossibility to find a black hole in a dynamic cactus with 2 agents.
    \item With 3 agents we establish lower bound of $\Omega(n^{1.5})$ rounds,  $\Omega(n^{1.5})$ moves, and we also establish upper bound of $O(n^{2})$ rounds and $O(n^{2})$ moves.
    \item With $4$ agent improved lower bounds are $\Omega(n)$ rounds and $\Omega(n)$ moves.
\end{itemize}
Next, when the cactus graph has at most $k$ ($k>1$) dynamic edge at any round.
\begin{itemize}
    \item Establish the impossibility to find the black hole with $k+1$ agents.
    \item With $k+2$ agents we establish lower bound of $\Omega({(n+2-3k)}^{1.5})$ rounds and $\Omega({(n+2-3k)}^{1.5}+2k)$ moves.
    \item With $2k+3$ agents improved lower bounds are $\Omega(n+2-3k)$ rounds, $\Omega(n+2-k)$ moves, and we establish an upper bound of $O(kn)$ rounds and $O(k^{2}n)$ moves.
\end{itemize}

\begin{table}[H]
\centering
\begin{tabular}[t]{|c|c|c|c|c|}
\hline

\# DE & \# Agents & Moves & Rounds & \\
\hline

1&3&$\Omega(n^{1.5})$&$\Omega(n^{1.5})$ & LB (Cor \ref{lowerbound-single-edge} \& Thm \ref{lowerbound-move-round-single-edge})\\\cline{2-5}

&3&$O(n^2)$&$O(n^2)$ & UB (Thm \ref{complexity1edge})\\\cline{2-5}

&4&$\Omega(n)$& $\Omega(n)$ & LB (Thm \ref{fouragent-LB})\\

\hline

$k$ &$k+2$& $\Omega({(n+2-3k)}^{1.5}+2k)$& $\Omega({(n+2-3k)}^{1.5})$ &LB (Cor \ref{lowerboundkedgecorollary} \& Thm \ref{k+2-LB})\\\cline{2-5}

& $2k+3$ &  $\Omega(n+2-k)$  & $\Omega(n+2-3k)$ & LB (Thm \ref{2k+3-LB})\\\cline{2-5}

 &$2k+3$&$O(k^2n)$& $O(kn)$& UB (Thm \ref{agentkedge} \& Thm \ref{multi-complexity})\\
\hline
\end{tabular}
\caption{Summary of Results $(k>1)$, where $LB$, $UB$ and $DE$ represent lower bound, upper bound and dynamic edge, respectively.}
\end{table}

\noindent\textbf{Organization:} Rest of the paper is organized as follows. In section \ref{Model}, we discuss the model and preliminaries. Section \ref{LB-results}, we give the lower bounds. In section \ref{BHS-section}, we present the algorithm and its correctness for both the single and multiple dynamic edge cases and finally concluding in section \ref{conclusion}. 

\section{Model and Preliminaries}\label{Model}

\noindent \textbf{Dynamic Graph Model:} We adapt the synchronous dynamic network model by Kuhn et al. \cite{kuhn2010distributed} to define our dynamic cactus graph $\mathcal{G}$. The vertices (or nodes) in $\mathcal{G}$ are static, whereas the edges are dynamic i.e., the edges can disappear (or in other terms go missing) and reappear at any round. The dynamicity of the edges holds as long as the graph is connected. The dynamic cactus graph $\mathcal{G}=(V,\mathcal{E})$ is defined as a collection of undirected cactus graphs $<G_0,G_1,\cdots,G_r,\cdots>$, where $G_r=(V,E_r)$ is the graph at round $r$, $|V|=n$ and $\mathcal{E}=\cup^{\infty}_{r=0}E_r$, where $|E_r|=m_r$ denotes the number of edges in $G_r$. The adversary maintains the dynamicity of $\mathcal{G}$, by disappearing or reappearing certain edges at any round $r$ such that the underlying graph is connected. This model of dynamic networks is studied in \cite{kuhn2010distributed} and is termed as a 1-interval connected network. The degree of a node $u\in \mathcal{G}$ is denoted by $deg(u)$, in other words, $deg(u)$ denotes the degree of the node $u$ in $G_0$. The maximum degree of the graph $\mathcal{G}$ is denoted as $\Delta$. The vertices (or nodes) in $\mathcal{G}$ are anonymous, i.e., they are unlabelled, although, the edges are labelled, an edge incident to $u$ is labelled via the port numbers $0,\cdots,deg(u)-1$. The ports are labelled in ascending order along the counter-clockwise direction, where a port with port number $i$ denotes the port number corresponding to the $i$-th incident edge at $u$ in the counter-clockwise direction. Any edge $e=(u,v)$ is labelled by two ports, one among them is incident to $u$ and the other incident to $v$, they have no relation in common (refer to Fig. \ref{cactus-desc-fig}). Any number of agents can pass through an edge concurrently. Each node in $\mathcal{G}$ has local storage in the form of a \textit{whiteboard}, where the size of the whiteboard at a node $v\in V$ is $O(deg(v)(\log deg(v)+k\log k))$, where $deg(v)$ is the degree of $v$ and $k$ is the atmost number of missing edge. The whiteboard is essential to store the list of port numbers attached to the node.  Any visiting agent can read and/or write travel information corresponding to port numbers. Fair mutual exclusion to all incoming agents restricts access to the whiteboard. The network $\mathcal{G}$ contains a malicious node termed as \textit{black hole}, which eliminates any incoming agent without leaving any trace of its existence.

\begin{figure}[h]
\centering
\includegraphics[width=0.5\linewidth]{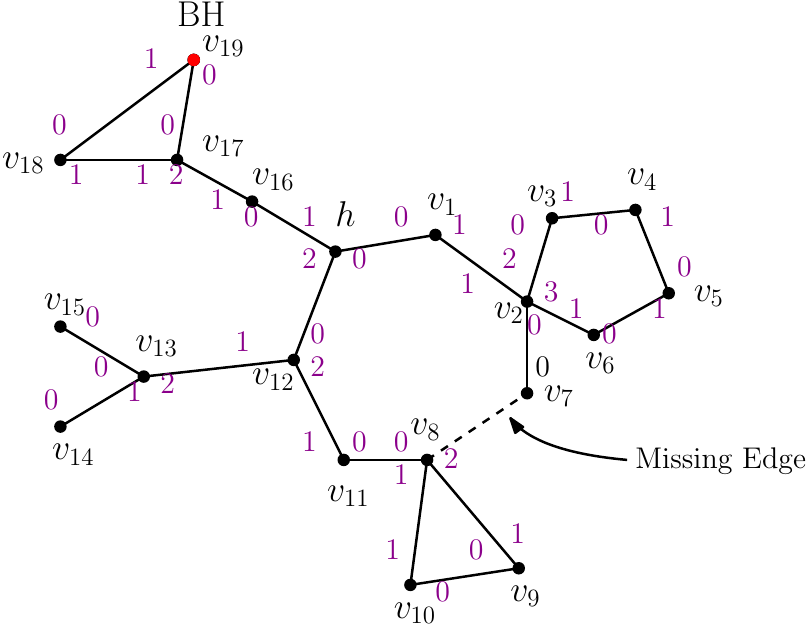}
\caption{A port labelled Cactus Graph is depicted where an edge $(v_7,v_8)$ is missing.}
\label{cactus-desc-fig}
\end{figure}

\noindent \textbf{Agent:} Let $\mathcal{A}$ = $\{a_1,\cdots, a_{m}\}$ be a set of $m\le n$ agents, they are initially co-located at a safe node termed as \textit{home}. Each agent has a distinct Id of size $\lfloor \log m \rfloor$ bits taken from the set $[1,m]$ where, each agent is a $t$-state automata, with local storage of $O(n\log \Delta)$ bits of memory, where $t \ge \alpha n \log \Delta$ and $\alpha$ is any positive integer. The agents visiting a node knows the degree of that node and also can determine the missing edges at that particular node, based on the whiteboard information. The agent while moving from $u$ to $v$ along the edge $e$ knows the port along which it left $u$ and the port along which it entered $v$. Further, the agents can see the $Id$s of other agents residing at the same node and can communicate with them.\\

\noindent \textbf{Round:} The agents operate in \textit{synchronous} rounds, where each agent gets activated in each round. At any time an agent $a_i\in \mathcal{A}$ gets activated, and they perform the following steps in a round: ``Communicate-Compute-Move" (CCM), while it is active. The steps are defined as follows:
\begin{itemize}
    \item \textit{Communication:} Agents can communicate among themselves when they are at the same node at the same time. They can communicate via whiteboard as well. In this step, agents can also observe their memory.
    \item \textit{Compute:} An agent based on the gathered information, local snapshot (i.e., information gathered on whether any other agent is present at the current node), internal memory and contents of a whiteboard, either decides to stay or choose the port number in case it decides to move. 
    
    \item \textit{Move:} The agent moves along the chosen port to a neighboring port, if it decides to move. While it starts to move, the agent writes the information in its memory and also writes on the whiteboard of its current node.
\end{itemize}
An agent takes one unit of time to move from a node $u$ to another node $v$ following the edge $e=(u,v)$.\\

\noindent\textbf{Time and Move Complexity:} Since the agents operate in synchronous rounds, each agent gets activated at each round to perform one CCM cycle synchronously. So, the time taken by the algorithm is measured in terms of \textit{rounds}. Another parameter is \textit{move} complexity, which counts the total number of moves performed by the agents during the execution of the algorithm.\\

\noindent\textbf{Configuration}: We define $C_r$ to be the \textit{configuration} at round $r$ which holds the following information: the contents in the whiteboard at each node, the contents of the memory of each agent and the locations of the agent at the start of round $r$. 

So, $C_0$ in the initial configuration at the start of an algorithm $\mathcal{H}$, whereas $C_r$ is the configuration obtained from $C_{r-1}$ after execution of the algorithm $\mathcal{H}$ at round $r$.

We recall next the notions of \textit{cautious}, \textit{pendulum} and \textit{pebble} walk.

\noindent{\bf Cautious Walk \cite{di2021black}.}
This is a movement strategy for agents in a network with a black hole, which ensures that at least two agents can travel together so that only one of them may be destroyed by the black hole, while the other survives. This is done as follows.
 If two agents $a_1$ and $a_2$ are at a safe node $u$, $a_1$ goes to an adjacent node $v$ and returns, while $a_2$ waits at $u$. 
If  $a_1$ returns to node $u$ at the next round, then both $a_1$ and $a_2$ can safely travel together to node $v$.
Otherwise, if at the next round, $a_1$ has not returned to node $u$, $a_2$ knows that either edge $(u,v)$ disappeared (which can be discovered by looking at the corresponding port) or that $v$ contains the black hole.\\

\noindent{\bf Pendulum Walk \cite{di2021black}.}
From a high-level perspective, an agent $a_1$ travels back and forth, increasing the number of hops at each movement, and always reports back to another agent $a_2$ (which may be referred to as a ``witness'' agent). 
More precisely, let us consider that two agents $a_1$ and $a_2$ are located at a node $u$. Now, $a_1$ decides to move one hop along the edge $(u,v)$ and reaches a node $v$. If $v$ is safe, then $a_1$ returns back to $u$ to inform $a_2$ that $v$ is safe. Next, $a_1$ decides to move two hops along the edge $(u,v)$ and $(v,w)$, thus reaching node $w$. If $w$ is safe, then $a_1$ returns back to $u$ via $v$. In general, $a_1$ in each movement increments the hop count by one, and at some round reaches a node $z$ along the path $P=u\rightarrow v \rightarrow w\rightarrow \cdots \rightarrow z$. If $z$ is safe, then it returns to $u$.\\

\noindent{\bf Pebble Walk:} This walk is a special case of \textit{pendulum} walk. In this case, as well an agent $a_1$ travels to and fro, but unlike \textit{pendulum} walk whenever $a_1$ reaches a new node it does not report back to another agent (or witness agent).

More precisely, let us consider an agent $a_1$ is currently at a node $u$. Now, $a_1$ decides to move one hop to an adjacent node $v$ using the edge $(u,v)$. If $v$ is safe, then it returns back to $u$. Further, $a_1$ again reaches $v$ and decides to move another one hop from $v$ to a new node $w$, following a similar strategy. So, the path followed by $a_1$ to reach $w$ is denoted as $P=u\rightarrow v \rightarrow u \rightarrow v \rightarrow w$. In general, $a_1$ moves one hop to a new node by similar to and fro movement from the last explored node.\\

\section{Lower Bound Results}\label{LB-results}

In this section, we first study the lower bound on the number of agents, move and round complexities required to solve the BHS problem when at most one edge can disappear or appear in a round. Then we generalize this idea for the case when at most $k$ edges can disappear or appear in a round such that the underlying graph remains connected.

\subsection{Lower Bound results on Single Dynamic Edge}

Here, we present all the results related to a dynamic cactus graph when at most one edge missing at any round.
\begin{theorem}[Impossibility for single dynamic edge]\label{lowerboundagent}
Given a dynamic cactus graph $\mathcal{G}$ of size $n>3$ with at most one edge which can disappear at any round such that the underlying graph is connected. Let the agents have the knowledge that the black hole is located in any of the three consecutive nodes $S=\{v_1,v_2,v_3\}$ inside a cycle of $\mathcal{G}$. Then it is not possible for two agents to successfully locate the black hole position. The impossibility holds even if the nodes are equipped with a whiteboard.  
\end{theorem}
The above theorem is a consequence of Lemma 1 in \cite{di2021black}.

\begin{corollary}[Lower bound for single dynamic edge]\label{lowerbound-single-edge}
To locate the black hole in a dynamic cactus graph $\mathcal{G}$ with at most one edge missing at any round, any algorithm requires at least 3 agents to solve the black hole search problem in $\mathcal{G}$.
\end{corollary}

\begin{lemma}[\cite{di2021black}]\label{expansion}
    If an algorithm solves black hole search with $O(n\cdot f(n))$ moves with three agents, then there exists an agent that explores a sequence of at least $\Omega(\frac{n}{f(n)})$ nodes such that:
    \begin{itemize}
        \item The agent does not communicate with any other node while exploring any node in the sequence.
        \item The agent visits at most $\frac{n}{4}$ nodes outside the sequence while exploring any node in the sequence.
    \end{itemize}
    This lemma holds even if the nodes are equipped with whiteboards.
\end{lemma}

In the next theorem, we give a lower bound on the move and round complexity required by any algorithm in order to solve the black hole search problem in a dynamic cactus.
\begin{theorem} \label{lowerbound-move-round-single-edge}
    Given a dynamic cactus graph $\mathcal{G}$, with at most one missing edge at any round. In the presence of a whiteboard, any algorithm $\mathcal{H}$ solves the black hole search problem with three agents in $\Omega(n^{1.5})$ rounds and $\Omega(n^{1.5})$ moves, when the agents have distinct IDs and they are co-located.
\end{theorem}

The above theorem is a consequence of theorem 6 in \cite{di2021black}. The next theorem, gives an improved lower bound on the move and round complexity when 4 agents try to locate the black hole instead of 3.

\begin{theorem}\label{fouragent-LB}
     Given a dynamic cactus graph $\mathcal{G}$, with at most one dynamic edge at any round. In the presence of whiteboard, any algorithm $\mathcal{H}$ solves the black hole search problem with four agents in $\Omega(n)$ rounds and $\Omega(n)$ moves, when the agents have distinct Ids and they are co-located.
\end{theorem}

\begin{proof}
     Let $a_1$, $a_2$, $a_3$ and $a_4$ are the four agents trying to find the black hole in $\mathcal{G}$. Consider $C_k$ be a cycle in $\mathcal{G}$ (refer to Fig. \ref{cactus-LB-fig}) and $y_k$ be the black hole node. Now, suppose $a_1$ enters the black hole and $Q$ be the set of consecutive nodes along which $a_1$, before entering the black hole has written its exact location on the whiteboard, so whenever an agent visits any of these nodes, it knows the exact location of the black hole and terminates. Let $e_q$ be the edge separating the black hole and $Q$ with the rest of the graph. Now, the adversary has the ability to restrict any agent from visiting the set of nodes in $Q$ by removing $e_q$. On the contrary, at any round, if no agent is trying to visit a node in $Q$, whereas establishing the black hole location along the counter-clockwise direction, then the adversary can restrict the agent visiting such a node by removing an edge, such as $e_{cc}$. So, the only possibility is while an agent always tries to visit a node in $Q$, the remaining two agents can correctly locate the black hole location while traversing a counter-clockwise direction in $C_k$, in at least $n$ rounds. Moreover, since at any round, a constant number of agents are moving. Hence, in order to successfully locate the black hole location with 4 agents, any algorithm requires $\Omega(n)$ rounds and $\Omega(n)$ moves. \qed\end{proof}

The next observation gives a brief idea about the movement of the agents on a cycle inside a dynamic cactus graph. It states that, when a single agent is trying to explore any unexplored cycle, the adversary has the power to confine the agent on any single edge of the cycle. Moreover, in case of multiple agents trying to explore a cycle inside a cactus graph, but their movement is along one direction, i.e., either clockwise or counter-clockwise, then also the adversary has the power to prevent the team of agents from visiting further unexplored nodes.

\begin{observation}
    Given a dynamic cactus graph $\mathcal{G}$, and a cut $U$ (with $|U|>1$) of its footprint connected by edges $e_1$ in the clockwise direction and $e_2$ in the counter clockwise direction to the nodes in $V \symbol{92} U$. If we assume that all the agents at round $r$ are at $U$, and there is no agent which tries to cross to $V \symbol{92} U$ along $e_1$ and an agent tries to cross along $e_2$, then the adversary may prevent agents to visit nodes outside $U$.
\end{observation}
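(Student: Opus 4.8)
The plan is to exhibit an explicit adversarial strategy that deletes, at the relevant round, exactly the one edge the agents attempt to traverse, and to argue that this single deletion is always admissible (it keeps the footprint connected) thanks to the cactus structure of the cut. The confinement then follows from the agents' inability to traverse a missing edge.

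First I would pin down the structure of the cut. Since $\mathcal{G}$ is a cactus and $U$ is a cut of its footprint for which the two crossing edges are exactly $e_1$ (clockwise) and $e_2$ (counter-clockwise), with both $U$ and $V \setminus U$ internally connected, these two edges must lie on a common simple cycle $C$ of the footprint. This is precisely the defining property of a cactus, in which no two distinct cycles share an edge: a minimal $2$-edge cut cannot split a bridge and cannot straddle two different cycles, so $e_1$ and $e_2$ are the two arcs of a single cycle crossing the boundary. The consequence I need is that deleting \emph{exactly one} of $\{e_1, e_2\}$ leaves the footprint connected, because the surviving crossing edge, together with the two arcs of $C$ on either side, still joins $U$ to $V \setminus U$.

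Next I would describe the adversary. At round $r$, by hypothesis at least one agent attempts a move along $e_2$ and no agent attempts a move along $e_1$. The adversary declares $e_2$ to be the missing edge of round $r$; by the previous paragraph this is a single-edge deletion that preserves connectivity, hence an admissible move for a $k$-bounded $1$-interval connected dynamic cactus for every $k \ge 1$. Any agent whose move step at round $r$ is along $e_2$ now finds no edge present; since in our model an agent cannot distinguish a missing edge from a non-existent one and cannot traverse an absent edge, it necessarily remains at its current node in $U$. Every other agent either stays in $U$ or could only leave via $e_1$, which no agent attempts. Therefore at round $r+1$ all agents are again located inside $U$.

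Finally I would close the loop by induction on the round index. The one-round argument re-establishes exactly the configuration assumed in the hypothesis, namely all agents in $U$ with all crossing attempts directed along $e_2$ (this is precisely the ``movement along one direction'' hypothesis). Hence, as long as the agents' common strategy keeps steering every crossing attempt through $e_2$, the adversary repeats the deletion of $e_2$ and the invariant ``all agents are in $U$'' is maintained indefinitely, so no agent ever reaches a node of $V \setminus U$. The main obstacle is the first step: justifying that one deletion both blocks every crossing and preserves connectivity. This hinges entirely on the cactus property that the two cut edges lie on the same cycle, which guarantees the adversary never needs to touch $e_1$ (deleting it too would disconnect the graph) and can confine itself to removing the single edge the agents actually use; once this structural fact and the indistinguishability of missing and absent edges are in hand, the remaining steps are routine.
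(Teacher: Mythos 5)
Your proposal is correct, but it takes a genuinely different route from the paper: the paper offers no argument of its own for this observation, disposing of it in one line as following from Observation~1 of \cite{di2021black} (the dynamic ring result), whereas you construct the adversary explicitly. Your construction isolates exactly the one fact that makes the ring argument transfer to cactuses: since the cut $(U, V\setminus U)$ is crossed by only the two edges $e_1,e_2$, and any cycle crosses a cut an even number of times, $e_1$ and $e_2$ lie on a common cycle of the footprint, so deleting the single edge $e_2$ is an admissible adversarial move (it is not a bridge, hence connectivity of the footprint is preserved for any $k\ge 1$); blocking $e_2$ at every round in which a crossing is attempted, combined with the hypothesis that no agent ever attempts $e_1$, maintains the invariant that all agents stay in $U$. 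This admissibility step is precisely what a reader must supply when adapting the ring argument, since in a ring every edge trivially lies on the unique cycle, while in a cactus one must rule out that $e_2$ is a bridge. One fine point you handled honestly but should keep visible: your structural claim uses that $U$ and $V\setminus U$ are each internally connected (otherwise both crossing edges could be bridges, as in a path with $U$ the two endpoints); this is implicit in the paper's ``clockwise/counter-clockwise'' phrasing, which presupposes the two edges bound arcs of one cycle, so your reading is faithful, but it is an assumption the bare statement does not spell out. In short, the paper's citation buys brevity; your proof buys a self-contained argument that makes the cactus-specific content of the observation explicit.
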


The above observation follows from observation 1 of \cite{di2021black}. The next lemma follows from the structural property of a cactus graph.

\begin{figure}
\centering
\begin{minipage}{.4\textwidth}
  \centering
  \includegraphics[width=1.17\linewidth]{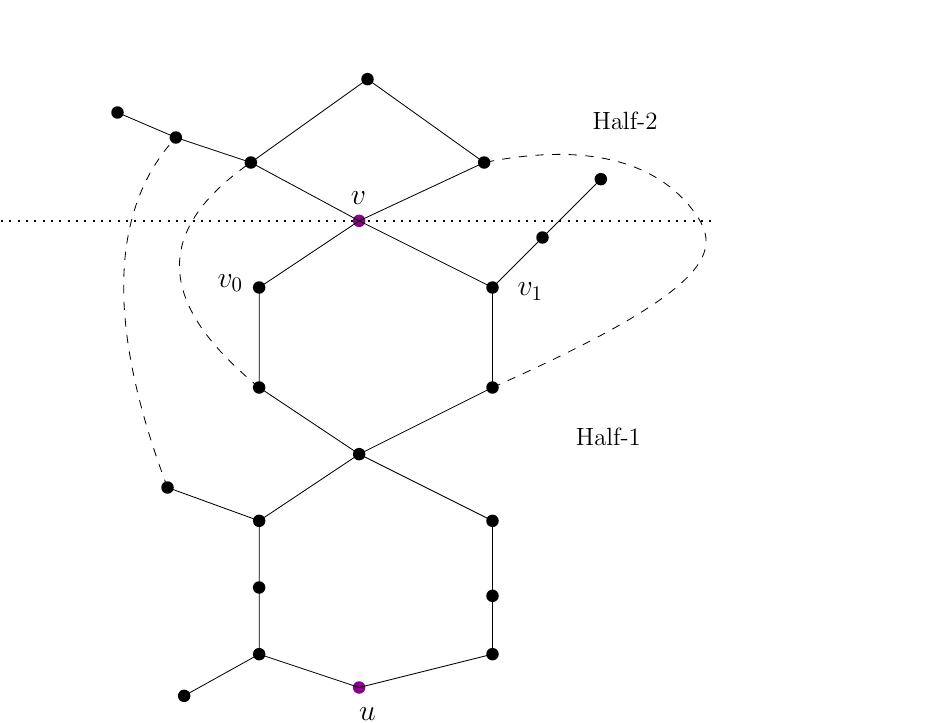}
\caption{Represents that any $u$ to $v$ path either passes through $v_0$ or $v_1$.}
\label{Cactus_path}
\end{minipage}%
\hspace{0.2cm}
\begin{minipage}{.5\textwidth}
  \centering
 \includegraphics[width=0.7\textwidth]{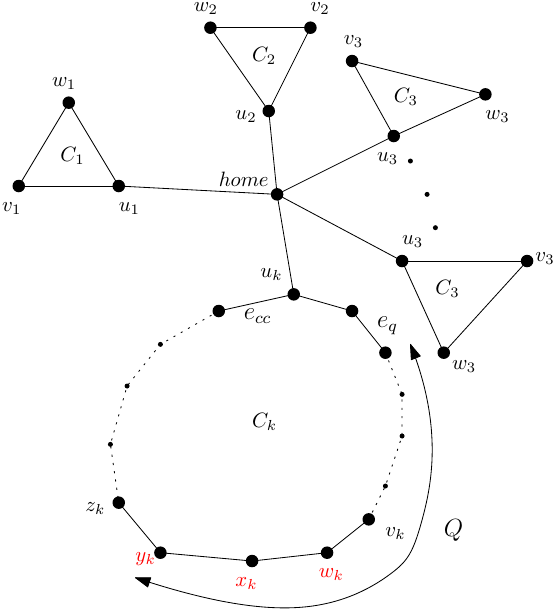}
\caption{A cactus graph, with $k$ different cycles, where red nodes in $C_k$ depicts possible location of the black hole.}
\label{cactus-LB-fig}
\end{minipage}
\end{figure}

\begin{lemma}\label{uv-path}
    Consider three consecutive nodes $\{v_0,v,v_1\}$ in a cactus graph $\mathcal{G}$, then any path from $u$ to $v$ in $\mathcal{G}$ must pass through either $v_0$ or $v_1$.
\end{lemma}
\begin{proof}
    We prove the above claim by contradiction. Suppose there exists a $u$ to $v$ path which neither passes through $v_0$ nor $v_1$, so in order to have an alternate path which does not pass through $v_0$ or $v_1$, implies that there must be at least one edge or a path passing from half-1 to half-2 (refer to Fig. \ref{Cactus_path}), where we define half-1 to be the subgraph on and above the horizontal half-line passing through $v$, whereas half-2 is the subgraph below the horizontal half-line passing through $v$. Now, the presence of such an edge or path, implies that there is at most one common edge between two cycles, and this violates the characteristic of a cactus graph. Hence, there cannot be any such $u$ to $v$ path which neither passes through $v_0$ nor $v_1$.
\qed \end{proof}

\subsection{Lower bound results for multiple dynamic edges}

Here, we present all the lower bound results for a dynamic cactus graph $G$, when at most $k$ edges are missing at any round.

\begin{theorem}[Impossibility for multiple dynamic edges]\label{lowerboundkedge}
    Given, a dynamic cactus graph $\mathcal{G}$ with at most $k$ dynamic edges at any round. It is impossible for $k+1$ agents, to successfully locate the black hole position, regardless of the knowledge of $n$, $k$ and the presence of a whiteboard at each node.
\end{theorem}
\begin{proof}
    We prove the above statement by contradiction. Let us consider a dynamic cactus graph $\mathcal{G}$ (refer to Fig. \ref{cactus-LB-fig}) of $n$ vertices, in which at most $k$ edges are dynamic at any round. The graph $\mathcal{G}$ contains $C_i$ cycles, where $i \in \{1,\cdots,k\}$, in which except the last cycle $C_k$ which of length $n+2-3k$, every other cycle is of length 3. Now, suppose there exists an algorithm $\mathcal{H}$, which successfully locates the black hole position in $\mathcal{G}$ with a set of $k+1$ agents, $\{a_1,\cdots, a_{k+1}\}$. Each agent is initially located at $home$, and after following the algorithm $\mathcal{H}$, the agents enter a configuration, where an agent $a_i$ reaches a node $v_i$ or $w_i$ inside $C_i$, where $i \in \{1,2,\cdots,k-1\}$ and the remaining two agents enter the cycle $C_k$. Suppose, the black hole is located at any one among the three consecutive nodes $S=\{w_{k},x_{k},y_{k}\}$, of which the agents have no idea. Since the adversary can disappear and reappear at most $k$ edges at any round, hence, it has the ability to restrict each $a_i$ inside $C_i$ by alternating its position between $v_i$ and $w_i$, by removing the edge $(u_i,v_i)$ and $(u_i,w_i)$ alternatively (where $i\in \{1,\cdots,k-1\}$). So, the remaining agents $a_k$ and $a_{k+1}$ have no other choice but to explore $C_k$. They cannot explore a node in $S$ together for the first time, as the adversary may place the black hole, eliminating both of them, whereas the other agents have no idea of the location of the black hole as they are unable to come out of $C_i$ ($i\in\{1,\cdots,k-1\}$). Let us consider, without loss of generality $a_{k}$ is the first to enter $S$, i.e., a node $w_k$ at some round $r$ or both agents enter a node in $S$ at round $r$ ($a_k$ enters $w_k$ and $a_{k+1}$ enters $y_k$). At this point, regardless of the position of $a_{k+1}$, the adversary removes the edge $(v_k,w_k)$. Now, in any case, $a_k$ cannot communicate with $a_{k+1}$ in the presence of a whiteboard as well, as on one side there is a black hole and on the other side, there is a missing edge. So, if $w_k$ is a safe node, then $a_1$ has no other option but to visit $x_k$ at some round. In the meantime, each of the remaining agents $a_i$ are stuck inside $C_i$ ($i\in\{1,\cdots,k-1\}$), respectively. Suppose if $y_k$ is safe, then $a_{k+1}$ can convey this information to at least one of the agents $a_i$ by writing this information in the whiteboard of either $u_i$ or $w_i$ (where $1\le i \le k-1$). So, now the only possibility for $a_{k+1}$ is to visit the next node $x_k$, if $x_k$ is the black hole, then the remaining $k-1$ agents cannot identify among $x_k$ and $w_k$ which is indeed the black hole node. The reason is, if $w_k$ is the black hole, then $a_k$ has already been eliminated, in this situation, the adversary can remove the edge $(y_k,x_k)$, which in turn restricts $a_{k+1}$ to come out of $C_k$ and convey this information to remaining $k-1$ agents. Now, since none of the remaining agents can come out of their respective cycles, regardless of the whiteboard information, left by the agents $a_{k}$ and $a_{k+1}$ before entering the black hole node, none of the remaining agents cannot figure out the exact location among $x_k$ and $w_k$ which is the correct black hole position. This leads to a contradiction.  
\qed\end{proof}

\begin{corollary}[Lower bound for $k$ dynamic edges]\label{lowerboundkedgecorollary}
To locate the black hole in a dynamic cactus graph $\mathcal{G}$ with at most $k$ dynamic edges at any round, any algorithm requires at least $k+2$ agents to solve the black hole search problem in $\mathcal{G}$.
\end{corollary}
 The next two theorems give lower bound and improved lower bound complexity with $k+2$ and $2k+3$ agents, respectively.

\begin{theorem}\label{k+2-LB}
    Given a dynamic cactus graph $\mathcal{G}$, with at most $k$ dynamic edges at any round. In presence of whiteboard, any algorithm $\mathcal{H}$ which solves the black hole search problem with $k+2$ agents requires $\Omega({(n+2-3k)}^{1.5})$ rounds and $\Omega({(n+2-3k)}^{1.5}+2k)$ moves when the agents have distinct $Id$s and they are co-located. 
\end{theorem}
\begin{proof}
    We prove the above statement by contradiction, we suppose that there exists an algorithm $\mathcal{H}$ which finds the black hole in $o({(n+2-3k)}^{1.5})$ rounds. Now, consider the graph $\mathcal{G}$ (refer to Fig. \ref{cactus-LB-fig}) of $k$-cycles, where $|C_i|=3$ ($1\le i \le k-1$) and $|C_k|=n+2-3k$, in which the set of agents $\mathcal{A}=\{a_1,a_2,\cdots,a_{k+2}\}$ are initially co-located at $home$. Suppose, while executing the algorithm $\mathcal{H}$, they enter a configuration, in which $a_i$ gets stuck inside $C_i$ ($1\le i \le k-1$), whereas $a_{k}$, $a_{k+1}$ and $a_{k+2}$ enter $C_k$. Now, by theorem 6 in \cite{di2021black}, a set of 3 agents requires $\Omega({(n+2-3k)}^{1.5})$ rounds to correctly locate the black hole inside $C_k$, this leads to a contradiction. Moreover, a constant number of agents move in $\Omega({(n+2-3k)}^{1.5})$ rounds, whereas to enter this configuration starting from $home$, at least $2k$ moves are required. Hence, the team of $k+2$ agents require $\Omega({(n+2-3k)}^{1.5})$ rounds and $\Omega({(n+2-3k)}^{1.5}+2k)$ moves to find the black hole.\qed\end{proof} 

\begin{theorem}\label{2k+3-LB}
    Any algorithm $\mathcal{H}$ in the presence of whiteboard which solves the black hole search problem with $2k+3$ agents in a dynamic cactus graph with at most $k$ dynamic edges at any round requires $\Omega(n+2-3k)$ rounds and $\Omega(n+2-k)$ moves when the agents have distinct Ids and they are co-located.
\end{theorem}

\begin{proof}
    We prove the above statement by contradiction, we suppose $\mathcal{H}$ finds the black hole in $o(n+2-3k)$ rounds. Now, suppose $\mathcal{G}$ be the graph (refer to Fig. \ref{cactus-LB-fig}) with $k$-cycles, where $|C_i|=3,~\forall ~1\le i \le k-1$ and $|C_k|=n+2-3k$, and consider the agents to be initially co-located at $home$. Now, suppose the agents while executing $\mathcal{H}$ enter a configuration, where $a_i$ gets stuck in $C_i$, where $1\le i \le k-1$. In this situation, the remaining $k+4$ agents try to explore $C_k$, so by theorem \ref{fouragent-LB} we know that it takes four agents among the $k+4$ agents to successfully locate the black hole in $\Omega(n+2-3k)$ rounds. The bound on the number of moves comes from the fact that at least $2k$ additional moves are required to attain this configuration, whereas a constant number of agents moves in $\Omega(n+2-3k)$ rounds. Hence, any algorithm requires $\Omega(n+2-3k)$ rounds and $\Omega(n+2-k)$ moves. \qed\end{proof}

\section{Black Hole Search in Dynamic Cactus}\label{BHS-section}
 In this section, we first present an algorithm to find the black hole in the presence of at most one dynamic edge, and then we present an algorithm to find the black hole in the presence of at most $k$ dynamic edges. The number of agents required to find the black hole is presented and further, the move and round complexities are analyzed for each algorithm. 

 \subsection{Black Hole Search in Presence of Single Dynamic Edge}\label{descsingleedge}
In this section, we present two black hole search algorithms, one for the agents and the other for the $LEADER$, in the presence of at most one dynamic edge in a cactus graph. Our algorithms require each node $v$ to have some local storage space, called \textit{whiteboard}. Our algorithms find the black hole search problem with the help of 3 agents $a_1$, $a_2$ and $a_3$, respectively. Among these three agents, we consider $a_3$ to be the $LEADER$.
The task of the $LEADER$ is different from the other two agents. The fundamental task of $LEADER$ is either to instruct an agent to perform a certain movement or to conclude that a certain agent has entered the black hole. In contrast, the agent's task is to visit nodes that are not explored. More precisely, our algorithm for $LEADER$ is {\textsc{SingleEdgeBHSLeader}} and for the agents $a_1$ and $a_2$ is {\textsc{SingleEdgeBHSAgent}}. Next, we discuss the contents of the whiteboard.

\noindent\textbf{Whiteboard:} For each node $v\in\mathcal{G}$, a whiteboard is maintained with a list of information for each port of $v$. For each $j$, where $j\in\{0,\cdots,deg(v)-1\}$, an ordered tuple  $(f(j),Last.LEADER)$ is stored on it, where the function $f$ is defined as follows: $f:\{0,\cdots,{deg(v)-1}\}\longrightarrow \{\bot,0,1\}^{*}$, 
\[f(j) = 
     \begin{cases}
       \text{$\bot$,} &\quad\text{if $j$ is yet to be explored by any agent}\\
       \text{$0\circ A$,} &\quad\text{if the set of agents in $A$ has visited $j$ but cannot}\\
       &\quad\text{fully explore the sub-graph originating from $j$}\\ 
       
       \text{1,} &\quad\text{if the sub-graph corresponding to $j$ is fully explored}\\ 
       &\quad\text{and no agent is stuck}\\
     \end{cases}\]
The symbol $`\circ '$ refers to the concatenation of two binary strings. We define $A$ to be the set of agents which has visited that particular port. More precisely, if $a_1$ and $a_2$ both visits the port $j$, then we have $A=\{a_1,a_2\}$. We discuss the entries on the whiteboard with the help of the following example. Consider a port $j$ at a node $u$, along which only $a_1$ has passed, but is unable to completely explore the sub-graph originating from $j$. In this case, the function $f(j)$ returns the binary string $001$, where the first $0$ represents that the sub-graph originating from $j$ is not fully explored, and the next $01$ represents the $Id$ of $a_1$, so we have $A=\{a_1\}$.

The entry $Last.LEADER$ stores the bit 1 if $j$ is the last visited port in $v$ by the $LEADER$, otherwise, it stores 0.

Each agent (i.e., $a_1$ and $a_2$) performs a $t$-\textsc{Increasing-DFS} \cite{fraigniaud2005graph}, where the movement $a_1$ and $a_2$ can be divided in to two categories \textit{explore} and \textit{trace}:
\begin{itemize}
    \item In \textit{explore}, an agent performs either \textit{cautious} walk or \textit{pendulum} walk depending on the instruction of the $LEADER$. In this case, an agent visits a node for the first time, i.e., it only chooses a port $j$, such that $f(j)=\bot$.
    \item In \textit{trace}, an agent performs \textit{pendulum} walk, where it visits a node that has already been visited by the other agent. In this case, an agent $a_1$ (or $a_2$) chooses a port $j$, where $f(j)=0\circ A$ and $A=\{a_2\}$ (or $A=\{a_1\}$). 
\end{itemize}

\noindent The task of the $LEADER$ can be explained as follows:
\begin{itemize}
    \item Instructs $a_1$ or $a_2$ to perform either \textit{cautious} or \textit{pendulum} walk.
    \item Maintains the variables ${length}_{a_i}$ and $P_{a_i}$ ($a_i\in\{1,2\}$), while $a_i$ is performing \textit{pendulum} walk. The $LEADER$ increments ${length}_{a_i}$ by 1, whenever $a_i$ traverses a new node and report this information to $LEADER$. Moreover, $P_{a_i}$ is the sequence port traversed by $a_i$ from the initial node from which it started its \textit{pendulum} walk to the current explored node.
    \item It also maintains the variables $L_{a_i}$ and $PL_{a_i}$, where $L_{a_i}$ calculates the length of the path traversed by $LEADER$ away from the initial position of $a_i$ and $PL_{a_i}$ stores the sequence of these ports.
    \item Lastly, terminates the algorithm whenever it knows the black hole position.
\end{itemize}

\noindent An agent $a_1$ or $a_2$ \textit{fails to report} to $LEADER$, for one of the following reasons: it has either entered the black hole or it has encountered a missing edge along its forward movement.

The algorithm \textsc{SingleEdgeBHSLeader} assigns only the $LEADER$ to communicate with the remaining agents, in order to instruct them regarding their movements, whereas the agents $a_1$ and $a_2$ do not communicate among themselves. The agents $a_i$ (where $i \in \{1,2\}$) stores $P_{a_i}$ while performing \textit{pendulum} walk. For example, let $a_1$ and $LEADER$ are co-located at $h$ in Fig. \ref{cactus-desc-fig}. Now, suppose $a_1$ starts \textit{pendulum} walk and reaches $v_1$. Correspondingly, $P_{a_2}$ gets updated to $P_{a_2}\cup 0$.

Next, we discuss in detail the description of the algorithm.

\noindent \textbf{Algorithm Description:} 
In this section, we give a high-level description of the algorithms \textsc{SingleEdgeBHSAgent} and \textsc{SingleEdgeBHSLeader}.\\

\noindent\textit{Description of \textsc{SingleEdgeBHSAgent()}}: This algorithm is executed by the agents $a_1$ and $a_2$, in which they are either instructed to perform \textit{cautious} or \textit{pendulum} walk. The agents perform $t$-\textsc{Increasing-DFS} \cite{fraigniaud2005graph} strategy for deciding the next port, further that port is indeed chosen by the agent for its movement based on the whiteboard information. Before commencing the algorithm, each node is labelled as $(\bot,0)$. Initially, the agents start from $home$, where without loss of generality, $a_1$ is instructed to perform \textit{cautious} and $a_2$ is instructed to perform \textit{pendulum} walk. Let us consider $a_1$ is at a node $u$, then the decision taken by $a_1$ based on the contents of the whiteboard is as follows:
\begin{itemize}
    \item If $\exists$ at least one port with $f()$ value is $\bot$ and $i$ being the minimum among them, then choose that port and move to its adjacent port $u'$ via $i$ from $u$.  
    \item If there is no such port $i$ at $u$, with $f(i)=\bot$, then it backtracks accompanying the $LEADER$ to a node where $\exists$ such $i$ with $f(i)=\bot$.
\end{itemize}
Now, suppose $a_1$ reaches $u'$ through the $i$-th port and it is safe, then it returns back to $u$ in the subsequent round, on condition that the edge $(u,u')$ remains. Otherwise, it stays at $u'$ until the edge reappears. Now, while it returns back to $u$, if $LEADER$ is found, then it accompanies $LEADER$ to $u'$ in the next round. Otherwise, it moves towards $LEADER$ following $Last.LEADER$ entries on whiteboard.

Next, let us consider $a_2$ is at a node $v$, then the decision taken by $a_2$ based on the whiteboard contents at $v$ is as follows:

\begin{itemize}
    \item If $\exists$ at least one port with $f()$ value is $\bot$ and $i$ being the minimum among them, then choose that port and move to its adjacent port $v'$ via $i$ from $v$.
    \item If there is no such port $i$ with $f(i)=\bot$, but $\exists$ a port $j$ with $f(j)=0\circ A$, where $A=\{a_1\}$ then it chooses that port and moves to its adjacent node $v'$. Otherwise, if $A=\{a_2\}$ or $A=\{a_1,a_2\}$, then it chooses a different available port, or backtracks to a node where there exists an available.
    \item If all ports have the value $1$, then it backtracks to a port where each port value is not $1$. 
\end{itemize}
Suppose, $a_2$ travels to $v'$ using the port $i$, then first it stores the port $i$, and after visiting $v'$ it moves towards $LEADER$, based on the stored ports, if it is unable to find $LEADER$, then it follows $Last.LEADER$ to meet the $LEADER$. Now, whenever it meets the $LEADER$ it provides the sequence of ports $P_{a_i}$ it has traversed from its initial position to the $LEADER$. Moreover, if at any moment it encounters a missing edge and no other agent is waiting for that missing edge, then it waits until the edge reappears. Now, irrespective of \textit{cautious} or \textit{pendulum} walk, whenever an agent $a_1$ (or $a_2$) moves along a port $i$ (say), it updates $f(i)=f(i)\circ a_1$ (or $f(i)=f(i)\circ a_2$) in whiteboard with respect to $i$.\\

\noindent\textit{Description of \textsc{SingleEdgeBHSLeader()}}: This algorithm is executed by the $LEADER$. It initially instructs $a_1$ to perform \textit{cautious} and $a_2$ to perform \textit{pendulum} walk. Whenever an agent, suppose $a_2$ while performing \textit{pendulum} walk, explores a new node and meets with $LEADER$, it increments ${length}_{a_2}$ by 1 and stores the sequence of port $P_{a_i}$ from $a_i$. On the other hand, if the $LEADER$ moves from its current position away from $a_2$, it increments $L_{a_2}$ by 1 at each such movement while updating the sequence of ports $PL_{a_i}$ after each such movement. Suppose, if the $LEADER$ moves away from $a_i$ from its current node $u$ to a node $v$ along the port $i$. It does the following things: first, it updates $Last.LEADER$ at $u$ corresponding to $i$ as 1, while the rest to 0. Second, it increments $L_{a_i}$ by 1 and lastly, updates $PL_{a_i}=PL_{a_i}\cup \{i\}$. Further, whenever $LEADER$ finds a missing edge along its path, it stops until the edge reappears. The instructions made by $LEADER$ related to the movement of $a_1$ and $a_2$ are as follows:
\begin{itemize}
    \item If $a_2$ without loss of generality (w.l.o.g), fails to report while performing \textit{pendulum}, while $a_1$ is performing \textit{cautious} walk, then $LEADER$ instructs $a_1$ to perform \textit{pendulum} walk.
    \item If $LEADER$ is stuck at one end of the missing edge, then it instructs both $a_1$ and $a_2$ to perform \textit{pendulum} walk if not already performing the same.
    \item If $LEADER$ finds a missing edge reappear, while both $a_1$ and $a_2$ performing \textit{pendulum} walk, then it instructs either $a_1$ or $a_2$ to perform \textit{cautious} walk, based on the fact which among $a_1$ or $a_2$ is faster to report to $LEADER$. 
\end{itemize}

\begin{figure}
 \centering
 \includegraphics[width=0.8\linewidth]{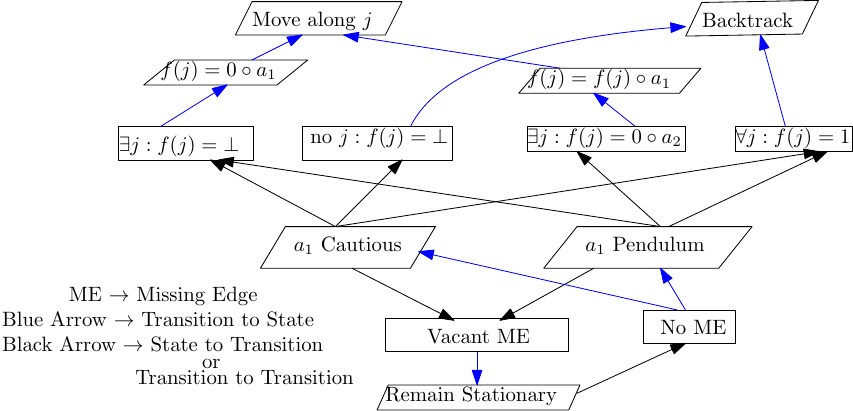}
 \caption{An illustration of algorithm \textsc{SingleEdgeBHSAgent()} in terms of the agent $a_1$}
 \label{StateDiagram-Agent}
 \end{figure}

Next, we discuss the situations, when $LEADER$ decides that either $a_1$ or $a_2$ has entered the black hole and terminates the algorithm.
\begin{itemize}
    \item If $a_1$ w.l.o.g, while performing \textit{cautious} walk fails to return, but the edge between $LEADER$ and $a_1$ remains, then $LEADER$ identifies $a_1$ to be in black hole.
    \item If $a_1$ w.l.o.g, while performing \textit{cautious} walk fails to return, but the edge between $LEADER$ and $a_1$ is missing, on the other hand $a_2$ which is performing \textit{pendulum} walk also fails to return, then $LEADER$ identifies $a_2$ to be in black hole.
    \item If both $a_1$ and $a_2$ is performing \textit{pendulum} walk, and $LEADER$ is stuck at one end of the missing edge. In this situation if both $a_1$ and $a_2$ fail to report, then $LEADER$ moves towards the agent $a_1$ (or $a_2$) with the help of $P_{a_1}$ and $PL_{a_1}$ (or $P_{a_2}$ and $PL_{a_2}$) based on the fact which among them is last to report to the $LEADER$, suppose that agent is $a_1$, then we have the following cases:
    \begin{itemize}
        \item If $a_1$ is found, then the $LEADER$ understands $a_2$ in black hole.
        \item If $a_1$ cannot be found and there is no missing edge, then $LEADER$ identifies $a_1$ to be in a black hole.
        \item If $a_1$ cannot be found, and there is a missing edge, then $LEADER$ waits, and if then also $a_2$ fails to report then $LEADER$ identifies $a_2$ to be in a black hole.
    \end{itemize}
\end{itemize}

Figure \ref{StateDiagram-Agent}, represents all possible states that an agent $a_1$ or $a_2$ attain, while executing \textsc{SingleEdgeBHSAgent()}, whereas Figure \ref{StateDiagram-Leader}, represents all possible states the $LEADER$ attains while executing \textsc{SingleEdgeBHSLeader()}. The pseudocode of \textsc{SingleEdgeBHSAgent()} is explained in Algorithm \ref{algorithmagent}, whereas the pseudocode of \textsc{SingleEdgeBHSLEADER()} is explained in Algorithm \ref{algorithmLeader}.

\begin{figure}
 \centering
 \includegraphics[width=0.9\linewidth]{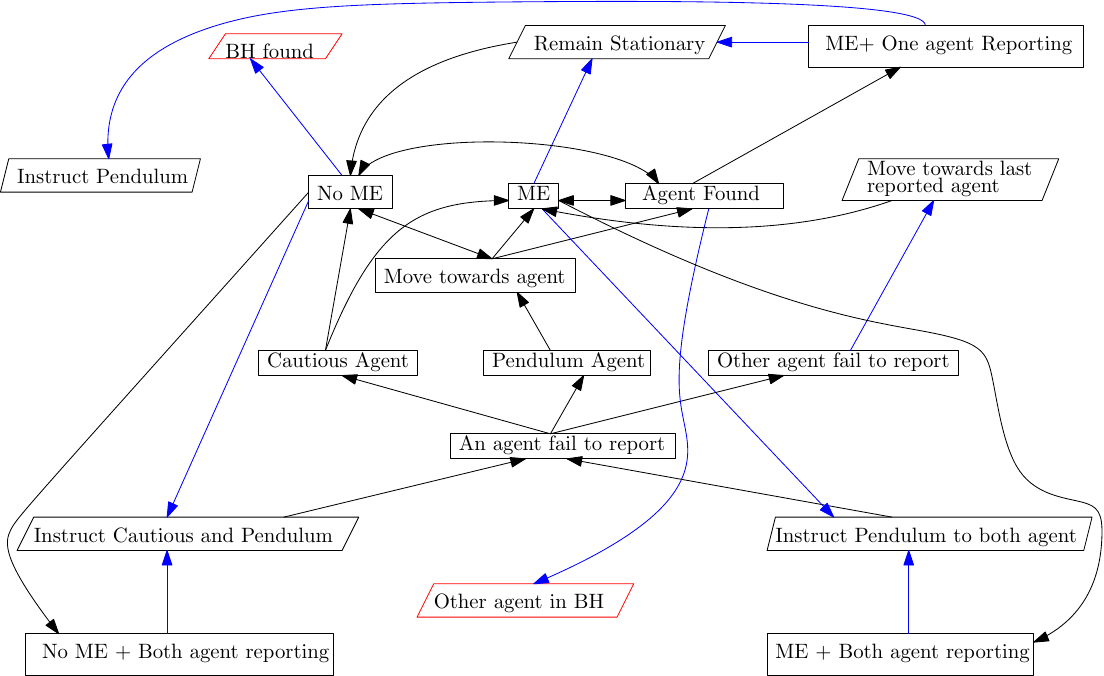}
 \caption{An illustration of algorithm \textsc{SingleEdgeBHSLeader()}}
 \label{StateDiagram-Leader}
 \end{figure}

\begin{algorithm2e}[!ht]\footnotesize
\caption{\sc{SingleEdgeBHSAgent($a_i$)}}\label{algorithmagent}
Set $P_{a_i}= \emptyset$.  \\ 
\While{$a_i$ performs cautious walk}
{
\If{each port at $u$ has $f(j)\neq \bot$}
{
Backtrack with $LEADER$ to a node $w$, where $\exists ~j$, such that $f(j)=\bot, ~j\in\{0,1,\cdots,deg(w)-1\}$.\\
}
\Else
{
Choose the minimum $j$ at $u$ with $f(j)=\bot$, and update $P_{a_i}=P_{a_i}\cup \{j\}$ and $f(j)=0\circ a_1$.\\ 
Travel to the adjacent node $v$ via $j$. \\
\If{$v$ is safe and $e=(u,v)$ remains}
{
Return back to $u$ and take $LEADER$ to $v$.\\
}
\ElseIf{$v$ is safe and $e=(u,v)$ is missing}
{
Remain at $v$ until $e$ reappears.\\
}
}
}
\While{$a_i$ performs pendulum walk}
{
\If{each port at $u$ has $f(j)=1$}
{
Backtrack to a node $w$ where $\exists ~j$ such that $f(j)\neq 1, ~j\in\{0,1,\cdots,deg(w)-1\}$. \\
\If{a missing edge is encountered while backtracking}
{
Remain stationary until the edge reappears.\\
}
}
\Else
{
Choose the minimum port $j$ such that $f(j)\neq 1$ or $0\circ A$, where $a_i \in A$. \\
Store $P_{a_i}=P_{a_i}\cup \{j\}$ and move to adjacent node $v$ via $j$.\\ 
Update $f(j)=f(j)\circ a_i$ and return to $LEADER$ following 
$P_{a_i}$ and $Last.LEADER$.\\
\If{a vacant missing edge is encountered}
{
Remain stationary until the edge reappears.\\
}
\Else
{
Choose a different available port or backtrack.\\
}
Return back to $v$ using $P_{a_i}$.\\
}
}
\end{algorithm2e}

\begin{algorithm2e}[!ht]\footnotesize
\caption{\sc{SingleEdgeBHSLeader($a_1,{length}_{a_2},L_{a_2},P_{a_2},PL_{a_2}$)}}\label{algorithmLeader}
Instruct $a_1$ to perform \textit{cautious} and $a_2$ to perform \textit{pendulum} walk.\\
Set ${length}_{a_1}=L_{a_1}=0$ and $PL_{a_1}=\emptyset$.\\
\While{the black hole is not found}
{
\If{$a_1$ does not return after a round} 
{
\If{no missing edge}
{
Conclude $a_1$ has entered the black hole and terminate.\\
}
\Else
{
Remain stationary until the edge reappears.\\
\If{$a_2$ fail to return within $2(L_{a_2}+{length}_{a_2}+1)$ round}
{
\If{the edge is still missing}
{
Conclude $a_2$ has entered the black hole and terminate.\\
}
\Else
{
Instruct $a_1$ to perform \textit{pendulum} walk and move towards $a_2$ following $P_{a_2}$ and $PL_{a_2}$ while each time incrementing $L_{a_1}$ by 1 and updating $PL_{a_1}$ by storing each port traversed.\\
Set $Last.LEADER=1$ for each port traversed and $Last.LEADER=0$ for rest of the ports at each node.\\
\textsc{Pendulum($a_2,{length}_{a_1},{length}_{a_2},L_{a_1},L_{a_2},P_{a_1},P_{a_2},PL_{a_1},PL_{a_2}$)}.\\
}
}
\Else
{
Update ${length}_{a_2}={length}_{a_2}+1$ and store $P_{a_2}$ from $a_2$.\\
}
}
}
\Else
{
Update $L_{a_2}=L_{a_2}+1$ and $PL_{a_2}=PL_{a_2}\cup \{j\}$, where $j$ is the port taken by $LEADER$.\\
\If{$a_2$ fail to return within $2(L_{a_2}+{length}_{a_2}+1)$ round}
{
Instruct $a_1$ to perform \textit{pendulum} walk and move towards $a_2$ following $P_{a_2}$ and $PL_{a_2}$ while each time incrementing $L_{a_1}$ by 1 and updating $PL_{a_1}$ by storing each port traversed.\\
Set $Last.LEADER=1$ for each port traversed and $Last.LEADER=0$ for rest of the ports at each node.\\
\textsc{Pendulum($a_2,{length}_{a_1},{length}_{a_2},L_{a_1},L_{a_2},P_{a_1},P_{a_2},PL_{a_1},PL_{a_2}$)}.\\
}
\Else
{
Update ${length}_{a_2}={length}_{a_2}+1$ and store $P_{a_2}$ from $a_2$.\\
}
}
}
\end{algorithm2e}

\begin{algorithm2e}[!ht]\footnotesize
\caption{\sc{Pendulum($a_2,{length}_{a_1},{length}_{a_2},L_{a_1},L_{a_2},P_{a_1},P_{a_2},PL_{a_1},PL_{a_2}$)}}\label{pendulum}
\While{black hole is not found}
{
\If{$a_2$ is found}
{
Set ${length}_{a_2}=L_{a_2}=0$ and $PL_{a_2}=\emptyset$.\\
\If{a forward edge is missing}
{
Remain stationary and instruct $a_2$ to perform \textit{pendulum} walk.\\
\If{$a_1$ does not return after $2(L_{a_1}+{length}_{a_1}+1)$ round}
{
Wait for $2(L_{a_2}+{length}_{a_2}+1)$ round.\\
\If{$a_2$ does not return to $LEADER$}
{
Traverse towards $a_2$ following $P_{a_2}$ and $PL_{a_2}$, while updating $Last.LEADER$ and incrementing $L_{a_1}$ and updating $PL_{a_1}$ for each port traversed.\\
\If{$a_2$ is found}
{
Conclude the $a_1$ has entered black hole and terminate.\\
}
\Else
{
If a forward edge is missing, then wait for $2({length}_{a_1}+1+L_{a_1})$ rounds.\\
If the edge is still missing and $a_1$ does not return, conclude $a_1$ is in black hole and terminate.\\
Otherwise, conclude $a_2$ has entered black hole and terminate.\\
}
}
\Else
{
Increment ${length}_{a_2}={length}_{a_2}+1$ and update $P_{a_2}$ from $a_2$.\\
\textsc{Pendulum($a_2,{length}_{a_1},{length}_{a_2},L_{a_1},L_{a_2},P_{a_1},P_{a_2},PL_{a_1},PL_{a_2}$)}.\\
}
}
\Else
{
Increment ${length}_{a_1}={length}_{a_1}+1$ and update $P_{a_1}$ from $a_1$.\\
}
}
\Else
{
\textsc{SingleEdgeBHSLeader($a_2,{length}_{a_1},L_{a_1},P_{a_1},PL_{a_1}$)}.\\
}
}
\ElseIf{it encounters a missing edge}
{
\textsc{Missing($a_2,{length}_{a_1},{length}_{a_2},L_{a_1},L_{a_2},PL_{a_1},P_{a_1},PL_{a_2},P_{a_2}$)}.\\
}
\Else
{
Wait for one round, and conclude $a_2$ has entered black hole and terminate.\\
}
}
\end{algorithm2e}

\begin{algorithm2e}[!ht]\footnotesize
\caption{\sc{Missing($a_2,{length}_{a_1},{length}_{a_2},L_{a_1},L_{a_2},PL_{a_1},P_{a_1},PL_{a_2},P_{a_2}$)}}\label{missing}
\While{$a_2$ is not found}
{
Remain stationary at one end of the missing edge.\\
\If{$a_1$ fails to report within $2(L_{a_1}+{length}_{a_1}+1)$ rounds}
{
Move towards $a_1$ following $P_{a_1}$ and $PL_{a_1}$ while updating $L_{a_2}$, $PL_{a_2}$ and $Last.LEADER$.\\
\If{$a_1$ is not found}
{
\If{there is no missing edge}
{
Conclude $a_1$ has entered black hole and terminate.\\
}
\Else
{
Remain stationary.\\
\If{$a_2$ does not report after $2(L_{a_2}+{length}_{a_2}+1)$ rounds}
{
Conclude $a_2$ has entered black hole and terminate.\\
}
\Else
{
Conclude $a_1$ has entered black hole and terminate.\\
}
}
}
}
\Else
{
Increment $length_{a_1}=length_{a_1}+1$ and update $P_{a_1}$ from $a_1$.\\
}
}
\If{the forward edge is missing}
{
\textsc{Pendulum($a_2,{length}_{a_1},{length}_{a_2},L_{a_1},L_{a_2},P_{a_1},P_{a_2},PL_{a_1},PL_{a_2}$)}.\\
}
\Else
{
\textsc{SingleEdgeBHSLeader($a_2,{length}_{a_1},L_{a_1},P_{a_1},PL_{a_1}$)}.\\
}

\end{algorithm2e}

\subsubsection{Correctness and Complexity:} In this section, we prove the correctness of our algorithm, as well as give the upper bound results in terms of move and round complexity.

\begin{lemma}\label{atmosttwo}
    Given a dynamic cactus graph $\mathcal{G}$ with at most one dynamic edge at any round $r$. Our Algorithms \textsc{SingleEdgeBHSAgent()} and \textsc{SingleEdgeBHSLeader()}, ensures that at most 2 agents are stuck due to the missing edge.
\end{lemma}
\begin{proof}

    In order to prove this claim, we first discuss the decisions that $LEADER$ takes following Algorithm \ref{algorithmLeader}, whenever it either finds an agent $a_1$ (w.l.o.g) or a missing edge.
    
    \begin{itemize}
        \item \underline{$LEADER$ finds $a_1$ at $u$ stuck due to a missing edge $(u,v)$}: In this case, if the edge does not reappear, then $LEADER$ instructs $a_1$ to perform \textit{pendulum} walk, whereas it remains stationary at $u$ until the edge $(u,v)$ reappears.
        \item \underline{$LEADER$ while tracing for $a_1$ finds a missing edge $(u,v)$}: If $a_1$ has not already entered the black hole, then it is at $v$ while the $LEADER$ is at $u$, because $a_1$ while reporting back to $LEADER$ gets stuck at $u$, whereas $LEADER$ after finding $a_1$ is not reporting back, moves towards $a_1$ and gets stuck at $u$. 
    \end{itemize}
    So, in any case, $LEADER$ does not allow more than one agent to occupy one end of the missing edge. 
    Next, we discuss the decisions taken by the agent $a_1$ (w.l.o.g) following the Algorithm \ref{algorithmagent}, whenever it either finds another agent or encounters a missing edge.
    \begin{itemize}
    \item \underline{If $a_1$ finds $u$ to be vacant and $(u,v)$ is a missing edge}: In this case, $a_1$ remains stationary at $u$, until either the edge reappears or until $LEADER$ appears at $u$.
    \item \underline{If the node $u$ of the missing edge $(u,v)$ is occupied by $a_2$}: In this case, if $a_1$ does not find any available ports at $u$ then it backtracks, otherwise, it moves along the available port at $u$.
    \end{itemize}
    So, irrespective of the situation from the above steps, we conclude that at most 2 agents can occupy a node on each side of the missing edge. All the above scenarios will hold as well if instead of $a_1$, the agent is $a_2$. This proves our statement.   
\qed\end{proof}
\begin{lemma}\label{atmosttwoLeader}
    The $LEADER$ following the algorithm \textsc{SingleEdgeBHSLeader()}, ensures that among the two agents stuck due to a dynamic edge, eventually, one must be the $LEADER$.
\end{lemma}
\begin{proof}
   We prove the above claim by contradiction. We claim that, whenever two agents $a_1$ and $a_2$ are at either side of a missing edge, they invariably need to hold these positions until the missing edge reappears. Consider the scenario, where an edge $(u,v)\in \mathcal{G}$ is missing for a finite but sufficiently large number of rounds, and $a_1$ and $a_2$ are stuck at $u$ and $v$, respectively. In the first case, we consider the scenario where both the agents are performing \textit{pendulum}. Now, this scenario has arised because $LEADER$ is occupying one end of an earlier missing edge, i.e., either at $u'$ or $v'$ (refer to Algorithm \ref{pendulum}). Since, at a round at most one edge can be missing, hence this earlier missing edge has reappeared, making way for the new missing edge $(u,v)$, which has stuck both $a_1$ and $a_2$ at $u$ and $v$, respectively. Now, as both of them fail to report, the $LEADER$ moves towards the agent which has last reported (say $a_1$ without loss of generality) and finds $a_1$. Whenever it finds $a_1$, it instructs $a_1$ to continue performing \textit{pendulum} walk while it remains stationary at $u$. This leads to a contradiction. In the second case, suppose $a_1$ at $u$ finds a missing edge $(u,v)$, while it is performing \textit{cautious} walk with $LEADER$, whereas $a_2$ at $v$ also finds a missing edge $(u,v)$ while it is performing \textit{pendulum} walk. In this scenario, $a_1$ and $LEADER$ are stuck at $u$ whereas $a_2$ is stuck at $v$. In this case, $LEADER$ following the Algorithm \ref{algorithmLeader} instructs $a_1$ to perform \textit{pendulum} walk, whereas it remains stationary at $u$. In this case, also, our claim leads to a contradiction. So we conclude that eventually among the two agents stuck due to a missing edge, one must be the $LEADER$. \qed\end{proof}

\begin{lemma}\label{infinite}
    An agent $a_1$ or $a_2$ following the \textsc{SingleEdgeBHSAgent()}, does not enter an infinite cycle.
\end{lemma}
\begin{proof}
    Suppose $C_1$ be a cycle in $\mathcal{G}$, where $u$ be the node along which an agent $a_1$ (w.l.o.g) enters the cycle and moves to the adjacent node $v$ using the port $j$. While it moves to $v$ via $j$, it updates $f(j)=f(j)\circ a_1$. Now, after exploring $C_1$, whenever the agent returns back to $u$ and again tries to take the port $j$, it finds that $a_1$ has already visited the port $j$ based on the whiteboard information, so irrespective of the agent performing \textit{cautious} or \textit{pendulum} walk, it does not take the port $j$ again (refer to steps 3-4 and 18 of Algorithm \ref{algorithmagent}). This phenomenon of an agent, concludes that it never enters an infinite cycle while performing \textsc{SingleEdgeBHSAgent()}. 
\qed\end{proof}

\begin{lemma}\label{explore}
    Algorithm \textsc{SingleEdgBHSAgent()}, ensure that in the worst case, every node in $\mathcal{G}$ is explored by either $a_1$ or $a_2$ until the black hole node is detected.
\end{lemma}
\begin{proof}
    Observe, each agent is a $t$-state finite automata, where $t \ge \alpha n\log\Delta$, the agents $a_1$ and $a_2$ performs $t-$\textsc{Increasing-DFS} while \textit{explore} or \textit{trace}. In addition, the whiteboard information helps $a_1$ and $a_2$ find the set of ports it hasn't visited yet at each node. Now, as stated in theorem 6 and corollary 7 in \cite{fraigniaud2005graph}, an agent with $O(n\log\Delta)$-bits of memory can explore any static graph with maximum degree $\Delta$ and diameter at most $n$. Moreover, lemma \ref{infinite} ensures that $a_1$ or $a_2$ never enters an infinite cycle. Further, lemma \ref{atmosttwoLeader} ensures that either $LEADER$ and one among $a_1$ or $a_2$ gets stuck by a missing edge, whereas the other agent can explore unobstructively until it either enters the black hole or detects it, as the underlying graph can have at most one dynamic edge. So, combining all these phenomenon, we can guarantee that $a_1$ or $a_2$, can explore the underlying graph until the black hole is detected.  \qed\end{proof}
    \begin{observation}\label{leader}
    Algorithm \textsc{SingleEdgeBHSLeader}, ensures that $LEADER$ does not enter the black hole.
\end{observation}

\begin{lemma}\label{agentBH}
In the worst case, $a_1$ and $a_2$ executing algorithm \textsc{SingleEdgBHSAgent()}, enters black hole in $O(n^2)$ rounds.
\end{lemma}
\begin{proof}
    Recall, $LEADER$ either instructs both $a_1$ and $a_2$ to perform \textit{pendulum} walk, or it instructs one among them to perform \textit{cautious} walk while the other \textit{pendulum} walk.
    \begin{itemize}
        \item \textit{\underline{$a_1$ is instructed cautious walk, whereas $a_2$ pendulum walk.}} Since, at most one edge is missing, then if $a_2$ is not blocked at all by any missing edge, then in at most $2n$ rounds, it explores a new node.
        \item \textit{\underline{Both $a_1$ and $a_2$ are instructed pendulum walk.}} This scenario arises when the $LEADER$ is stationary at one end of the missing edge. Now, in the worst case the missing edge for which $LEADER$ is stationary reappears, whereas both $a_1$ and $a_2$ while exploring a cycle along disjoint paths, gets blocked by a new missing edge (refer to Fig. \ref{cactus-desc-fig}, where suppose $a_1$ is at $v_7$ and $a_2$ at $v_8$). In this case, $LEADER$ finding no missing edge adjacent to itself, reaches either of them in at most $n$ rounds. Now, if $a_1$ is reached, then $LEADER$ instructs $a_1$ to continue \textit{pendulum} walk, while $LEADER$ remains stationary. In this scenario as well, $a_1$ explores a new node each in at most $2n$ rounds.
    \end{itemize}
    Hence, in any case, in at most $2n$ rounds, an agent explores at least one new node. This implies in $O(n)$ rounds, at least a new node is explored. This concludes the statement that either $a_1$ or $a_2$ enters the black hole in $O(n^2)$ rounds.  
\qed\end{proof}

\begin{lemma}\label{round}
Let us consider, the agents $a_1$ or $a_2$ enter the black hole at round $r$ while executing \textsc{SingleEdgeBHSAgent()}, then the $LEADER$ following \textsc{SingleEdgeBHSLeader()} detects the black hole node in $r+O(n^2)$ rounds.
\end{lemma}
\begin{proof}
    Suppose after $r$ rounds either $a_1$ or $a_2$ enters the black hole. Now in each of the following scenarios, we show that the $LEADER$ takes additional $O(n^2)$ rounds after round $r$ to detect the black hole:
    \begin{itemize}
        \item \textit{Scenario-1:} Suppose w.l.o.g, $a_1$ while performing \textit{cautious} walk with the $LEADER$ enters the black hole at round $r$. Hence, $a_1$ does not return back to $LEADER$, whereas the edge $e$ between the $LEADER$ and black hole is present. So, after waiting for one more round, $LEADER$ terminates the algorithm by concluding the black hole position. On the contrary, suppose the edge $e$ remains missing from the round $r$ onwards. Since, there is at most one missing edge at any round, and moreover the other end of the missing edge contains the black hole. So, $a_2$ can unobstructively explore a new node in at most $2n$ rounds while executing \textit{pendulum} walk, until it reaches the black hole. Hence, in $O(n^2)$ rounds, the agent $a_2$ enters black hole. Therefore, $a_2$ will also fail to report, and $LEADER$ will conclude the black hole position in $O(n)$ additional rounds from the round when $a_2$ enters the black hole. In conclusion, the $LEADER$ takes $O(n^2)$ rounds to conclude the black hole position, after $a_1$ has entered the black hole.
        \item \textit{Scenario-2:} Consider w.l.o.g, $a_2$ while performing \textit{pendulum} walk reaches the black hole at round $r$, while $a_1$ is performing \textit{cautious} walk with $LEADER$.
        So after $a_2$ fails to report, $LEADER$ performs the following task.
         \begin{itemize}
             \item If $a_1$ is with $LEADER$ when $a_2$ fails to report whereas the forward edge is missing, then $LEADER$ instructs $a_1$ to perform \textit{pendulum} walk whenever this edge reappears. Subsequently, $LEADER$ moves towards $a_2$.
             \item If $a_1$ is with $LEADER$ when $a_2$ fails to report and there is no missing edge in the forward direction, then $LEADER$ instructs $a_1$ to perform \textit{pendulum} walk, whereas $LEADER$ moves towards $a_2$.
         \end{itemize}
       So, in at most $n$ rounds, $LEADER$ reaches the last marked node of $a_2$ following $P_{a_2}$ and $PL_{a_2}$. Now, if $a_2$ is not found and there is no forward missing edge, then $LEADER$ concludes that $a_2$ has reached the black hole. This conclusion is executed in additional $O(n)$ rounds after $a_2$ has entered the black hole, i.e., after round $r$.
       
       Now on the contrary, if $a_2$ is not found by $LEADER$, and there exists a missing edge $(u,v)$ along the path towards $a_2$ in the forward direction, then $LEADER$ remains stationary at $u$ (say), but $a_1$ invariably performs \textit{pendulum} walk until it either reaches the black hole or gets stuck at $v$. The $LEADER$ in $O(n^2)$ rounds, understands the failure of $a_1$'s return as well, leaves $u$ and moves towards $a_1$. The possibilities are: either $a_1$ is found, or the $LEADER$ encounters another missing edge, or $a_2$ is not found. In each case, $LEADER$ concludes the black hole position in $O(n^2)$ rounds after $a_2$ has entered the black hole. 

       \item \textit{Scenario-3}: Both $a_1$ and $a_2$ are performing \textit{pendulum} walk, in which $a_1$ (say) enters the black hole. Now, again $a_2$ unobstructively explores the graph in $O(n^2)$ rounds while the $LEADER$ is stationary at one end of the missing edge, until $a_2$ either enters black hole or gets stuck by a missing edge. Now, the $LEADER$ invariably travels towards $a_2$, while it faces the following instances: either $a_2$ is found, or $a_2$ is not found, or encounters a missing edge. In any case, the $LEADER$ concludes the black hole position in $O(n^2)$ rounds after $a_1$ has entered the black hole.  \end{itemize}\qed\end{proof}

\begin{lemma}
The $LEADER$ following the algorithm \textsc{SingleEdgeBHSLeader()} correctly locates the black hole position.
\end{lemma}
\begin{proof}
We discuss all possible scenarios that can occur while executing algorithm \textsc{SingleEdgeBHSLeader()}.
\begin{description}
    \item [$a_1$ enters black hole at $u$ during \textit{cautious} walk:] In this case, according to Algorithm \ref{algorithmLeader}, if the edge $e=(u,v)$ between $LEADER$ and $a_1$ is missing, then $LEADER$ waits at $v$ until $e$ reappears. Meanwhile, $a_2$ is performing \textit{pendulum} walk, and since there is at most one dynamic edge at any round, hence $a_2$ will eventually reach $u$, i.e., enter the black hole from an alternate path. Now in this case, the $LEADER$ after waiting $2({length}_{a_2}+1+L_{a_2})$ rounds, concludes that $a_2$ has entered black hole. The conclusion is indeed correct, as one end of the edge which is missing edge, is occupied by $LEADER$ and the other end contains black hole. Since, there is at most one missing edge in $\mathcal{G}$ at any round. So, $a_2$ faces no obstruction while exploring the graph, until it enters the black hole. Hence, within $2({length}_{a_2}+1+L_{a_2})$ rounds $a_2$ otherwise will have reported back to $LEADER$. 
    \item [$a_1$ enters black hole during \textit{pendulum} walk:] $a_1$ which is initially performing \textit{cautious}, performs \textit{pendulum} walk for two reason.
    \begin{itemize}
        \item \underline{$a_2$ fails to report}. This situation arises when $a_1$ is initially performing \textit{cautious} walk with $LEADER$, and $a_2$ which is performing \textit{pendulum} walk fails to report. In this scenario, the $LEADER$ according to Algorithm \ref{algorithmLeader}, instructs $a_1$ to perform \textit{pendulum} walk, while it moves towards $a_2$ following $P_{a_2}$ and $PL_{a_2}$. Now we have two possibilities: first, $a_2$ is found and second $a_2$ is not found. If $a_2$ is found, then it is instructed to either perform \textit{cautious} or \textit{pendulum} walk, based on the fact that there is a forward missing edge or not. Now, in this situation, since $a_1$ has entered black hole and fails to report, then $LEADER$ moves towards $a_1$ while instructing $a_2$ to perform \textit{pendulum} walk. If the $LEADER$ does not find a missing edge corresponding to the last visited port of $a_1$, then it concludes that $a_1$ has entered black hole. Otherwise, if $LEADER$ gets stuck at a node $v$ due to a missing edge $e=(u,v)$, then it remains stationary, and while $a_2$ will either get stuck at $u$ or eventually enter black hole. So, ultimately $a_2$ also fails to report and $LEADER$ not knowing the reason behind $a_2$'s failure to report, moves towards $a_2$ leaving the node $v$. If $a_2$ is found, i.e., it got stuck at $u$ then it correctly concludes $a_1$ has entered the black hole, whereas if $LEADER$ encounters another missing edge while moving towards $a_2$, then after waiting an additional $2(L_{a_1}+{length}_{a_2}+1)$ round, it correctly concludes $a_1$ has entered black hole, as otherwise $a_1$ will have reported back to $LEADER$. Otherwise, if $a_2$ is not found, and there is no missing edge, then also $LEADER$ correctly concludes $a_2$ has entered black hole, as otherwise $a_2$ will have reported back to $LEADER$.  
        \item \underline{$a_1$ performs \textit{pendulum} walk with $a_2$}: This situation arised because, $a_2$ initially stopped reporting due to a missing edge, while $a_1$ is performing \textit{cautious} walk. Now, as $LEADER$ moves towards $a_2$ it instructs $a_1$ to change its movement to \textit{pendulum} walk. Moreover, it finds $a_2$ and the missing edge persists, in this situation, $a_2$ is further instructed to continue \textit{pendulum} walk whereas $LEADER$ remains stationary at one end of the missing edge. If both $a_1$ and $a_2$ fail to return, then $LEADER$ identifies at least one among them has entered black hole. It is because, the $LEADER$ holds one end of the missing edge, and at most one agent may be stuck at the other end, whereas the remaining agent must return if it has not entered the black hole. So, $LEADER$ moves towards the last reported agent $a_1$ (say w.l.o.g). If $a_1$ is not found and there is no missing edge in the forward direction, then $LEADER$ correctly identifies the black hole position. Otherwise, if $LEADER$ finds a missing edge along its path towards $a_1$ and further waits for $2({length}_{a_2}+1+L_{a_2})$ rounds, within which if $a_2$ also fails to return, then $LEADER$ concludes $a_2$ has entered black hole. Since, $LEADER$ has moved towards $a_1$ and encountered a new missing edge, this implies that the earlier missing edge has reappeared. Hence, $a_2$ has no other obstruction along its path towards $LEADER$, if it is originally stuck due to the earlier missing edge.
    \end{itemize}  
\end{description}
The explanation are similar, when $a_2$ enters black hole while performing either \textit{cautious} or \textit{pendulum} walk. So, we have shown that in each case the $LEADER$ correctly determines the black hole location. \qed\end{proof}

\begin{theorem}\label{complexity1edge}
   The agent following algorithms \textsc{SingleEdgeBHSAgent()} and \textsc{SingleEdgeBHSLeader()}, correctly locates the black hole in a dynamic cactus graph $\mathcal{G}$ with at most one dynamic edge at any round with $O(n^2)$ moves and in $O(n^2)$ rounds.
\end{theorem}
\begin{proof}
    Lemmas \ref{agentBH} and \ref{round} states that our Algorithms \ref{algorithmagent} and \ref{algorithmLeader}, correctly find the black hole in $O(n^2)$ rounds,  and since at each round each agent can move at most once. Hence, there can be at most 3 moves in each round of our algorithm. This implies, that the agent following Algorithms \ref{algorithmagent} and \ref{algorithmLeader}, solves the black hole search problem in $O(n^2)$ moves. 
\qed\end{proof}

\subsection{Black Hole Search in Presence of Multiple Dynamic Edges}\label{multiEdgeBHS-section}
In this section, we present an algorithm \textsc{MultiEdgeBHS()} for the agents to locate the black hole position, where the underlying graph is a dynamic cactus graph $\mathcal{G}$ but unlike the earlier section, where at most one edge can be missing at any round, in this section, we discuss the case in which there can be at most $k$ dynamic edges, such that the underlying graph remains connected. As discussed earlier, each node $v\in\mathcal{G}$ is equipped with a whiteboard of $O(deg(u)(\log deg(u)+k\log k))$ bits of memory. Moreover, there are a team of $2k+3$ agents, $\mathcal{A}=\{a_1,\cdots,a_{2k+3}\}$ which executes the algorithm \textsc{MultiEdgeBHS()}, starting from a safe node also termed as $home$. Next, we define the contents of information that can be present on a whiteboard. 

\noindent\textbf{Whiteboard:} For each node $v\in \mathcal{G}$, a whiteboard is maintained with a list of information for each port of $v$. For each port $j$, where $j\in \{0,\cdots,deg(v)-1\}$, an ordered tuple $(g_{1}(j),g_{2}(j))$ is stored on the whiteboard. The function $g_1$ is defined to be exactly the same as the function $f$ in section \ref{descsingleedge}.

On the other hand, the function $g_2$ is defined as follows, $g_{2}:\{0,\ldots,deg(v)-1\}\rightarrow \{\bot,0,1\}$,
\[g_{2}(j) = 
     \begin{cases}
       \text{$\bot$,} &\quad\text{if an agent is yet to visit the port $j$}\\
       \text{$0$,} &\quad\text{if no agent has returned to the node $v$ along $j$}\\
       \text{$1$,} &\quad\text{otherwise}\\ 
     \end{cases}\]

Each agent performs a $t$-\textsc{Increasing-DFS} \cite{fraigniaud2005graph}, where the movement of each agent can be divided into two types \textit{explore} and \textit{trace}:
\begin{itemize}
    \item In \textit{explore}, an agent performs either \textit{cautious} walk or \textit{pebble} walk depending on the situation. 
    \item In \textit{trace}, an agent walks along the safe ports of a node $v$, i.e., all such port $j\in v$ with $g_{2}(j)=1$. 
\end{itemize}
Our algorithm \textsc{MultiEdgeBHS()} requires no $LEADER$, unlike our previous algorithms in section \ref{descsingleedge}. In this case each $a_i$ (where $i\in\{a_1,a_2,\cdots,a_{2k+3}\}$) executes their operations, based on the whiteboard information they gather at each node. 

Next, we give a detailed description of the algorithm \textsc{MultiEdgeBHS()}. 

\noindent\textbf{Outline of Algorithm:} The team of agents $\mathcal{A}=\{a_1,\ldots,a_{2k+3}\}$ are initially located at a safe node, termed as $home$. Initially, the whiteboard entry corresponding to each port at each node in $\mathcal{G}$ is $(\bot,\bot)$. The lowest $Id$ agent present at $home$, i.e., $a_1$ in this case decides to perform \textit{cautious} walk. At the first round, it chooses the $0$-th port and if the edge corresponding to $0$-th port exists, then it moves along it to an adjacent node $v$ while updating $(g_1(0),g_2(0))$ at $home$ from $(\bot,\bot)$ to $(0\circ a_1,0)$. Now, if $v$ is safe, and the edge $(home,v)$ exists then it returns to $home$ at the second round, while updating $g_2(0)$ at $home$ to 1, i.e., marking the edge $(home,v)$ as safe. Further, if at the third round the edge $(home,v)$ exists, then $a_1$ accompanies $\mathcal{A}\backslash\{a_1\}$ to $v$ while updating $g_1(0)=g_1(0)\circ A'$, where $A'=\{a_2,\cdots,a_{2k+3}\}$. Otherwise, if the edge has gone missing, at the second round, then $a_1$ and $a_2$ remain at $v$ and $home$, respectively until the edge reappears, whereas the remaining agent continues to perform their respective movement. 

Now, consider a scenario where the edge $(home,v)$ goes missing at the third round when all the $\mathcal{A}$ agents are at $home$. In this case, $a_1$ remains at $home$ until the edge reappears, whereas the remaining agents continue to perform \textit{cautious} walk along the other available ports. Whenever the edge reappears suppose at the $r$-th round, then it starts \textit{pebble} walk. The movement of $a_1$ performing \textit{pebble} walk is as follows: at the $r+1$-th round $a_1$ moves to $v$, further the agent moves as follows:
\begin{itemize}
    \item If there exists a port $i$ with $g_2(i)=0$, and the edge remains, then at the $r+2$-th round $a_1$ stays at $v$. If no agent returns along $i$-th port, $a_1$ concludes that the node w.r.to the port $i$ is the black hole node. Otherwise, if an agent $a_j$ returns (for some $j>0$), then both $a_1$ and $a_j$ start \textit{cautious} walk. Moreover, if the edge does not exist, and there is no other agent at $v$, then $a_1$ waits until the edge reappears. Otherwise, if there is already an agent waiting, then $a_1$ decides to move from $v$ to some adjacent node, based on the whiteboard entry.
    \item If there exists a port $i$ with $g_2(i)=\bot$, then at $r+2$ round $a_1$ chooses that port and moves to the adjacent node while updating $(g_1(i),g_2(i))$ to $(0\circ a_1,0)$.
    \item If each port at $v$ is having its $g_2()$ value 1, then at $r+2$-th round $a_1$ backtracks to $home$, if $(home,v)$ exists, otherwise stays at $v$ until the edge reappears.
\end{itemize}

In general, whenever multiple agents meet at a node, they start \textit{cautious} movement. Moreover, when a single agent waiting for a missing edge reappears, then it starts \textit{pebble} walk. In addition, whenever an agent finds a port $i$ at some node in $\mathcal{G}$ with $g_2(i)=0$ and the edge exists, then after waiting for a round, it concludes the adjacent node w.r.to $i$ is the black hole node.

The pseudo code of \textsc{MultiEdgeBHS()} is explained in Algorithm \ref{algorithmLeader}.

\begin{algorithm2e}[!ht]\footnotesize
\caption{\sc{MultiEdgeBHS($u,a_i$)}}\label{multiedge-psuedo}
\If{more than one agent is available at $u$}
{
\textsc{MultipleAgent($u,a_i$)}.\\
}
\Else
{
Perform \textsc{SingleAgent($u,a_i$)}.\\

}
\end{algorithm2e}

\begin{algorithm2e}[!ht]\footnotesize
\caption{\textsc{MultipleAgent($u,a_i$)}}\label{multipleAgent-k}
\If{$a_i$ is the minimum $Id$ agent at $u$}
{
\If{a port $j$ with $(\bot,\bot)$ exists}
{
Traverse to $v$ via $j$ and update 
$g_1(j)=0\circ a_i$ and $g_2(j)=0$ at $u$.\\
\If{$v$ is safe and $(u,v)$ exits} 
{
Return to $u$, and update $g_2(j)=1$ at $u$.\\
\If{$(u,v)$ exists}
{
Move all available agents to $v$ via and update $g_1(j)=0\circ A'$, where $A'$ is the set of available agents at $u$.\\
Perform \textsc{MultiEdgeBHS($v,a_i$)}.\\
}
\Else
{
Stay at $u$ until $(u,v)$ reappears, if no agent is waiting for $(u,v)$. \\
Perform \textsc{MultiEdgeBHS($u,a_m$)}, where $a_m$ is the next available minimum $Id$ agent at $u$ after $a_i$.\\
}
}
\ElseIf{$v$ is safe but $(u,v)$ goes missing}
{
Remain at $v$, until the edge reappears.\\
}
}
\ElseIf{a port $j$ of $u$ with $(0\circ A,0)$ exists}
{
\If{the edge is missing}
{
Stay at $u$ until $(u,v)$ reappears, if no agent is waiting for $(u,v)$. \\
Perform \textsc{MultiEdgeBHS($u,a_m$)}, where $a_m$ is the next available minimum $Id$ agent at $u$ after $a_i$.\\
}
\Else
{Conclude the adjacent node to be the black hole and terminate.\\}
}
\ElseIf{a port $j$ of $u$ with $(0\circ A,1)$ exists}
{
\If{the edge is missing}
{
Stay at $u$ until $(u,v)$ reappears, if no agent is waiting for $(u,v)$. \\
Perform \textsc{MultiEdgeBHS($u,a_m$)}, where $a_m$ is the next available minimum $Id$ agent at $u$ after $a_i$.\\
}
\Else
{
\If{$A$ contains $Id$ of at least one available agent at $u$}
{Choose a different available port or backtrack.\\}
\Else
{
Move all available agents to $v$ via $j$-th port and update $g_1(j)=g_1(j)\circ A'$, where $A'$ are the set of all available agents then perform \textsc{MultiEdgeBHS($v,a_i$)}.\\
}
}
}
\Else
{
Backtrack with all available agents to a node $w$ with a port $j$, $g_1(j)\neq 1$ perform \textsc{MultiEdgeBHS($w,a_i$)}.\\
}
}
\Else
{
Follow instruction of the minimum $Id$ agent.\\
}

\end{algorithm2e}

\begin{algorithm2e}[!ht]\footnotesize
\caption{\textsc{SingleAgent($u,a_i$)}}\label{singleAgent-k}
\If{a port $j$ at $u$ with $(\bot,\bot)$ exists}
{
If that edge exists, then visit that node $v$ while updating $g_1(j)=0\circ a_i$ and $g_2(j)=0$.\\
Else, if that edge is missing, stay at $u$ until $(u,v)$ reappears, if no agent is waiting for $(u,v)$.\\
\If{$v$ is safe}
{
If the edge $(u,v)$ exists, then return to $u$ and update $g_2(j)=1$.\\
Else, if the edge $(u,v)$ is missing, then wait until it reappears.\\
\If{$(u,v)$ exists}
{
Visit $v$ and perform \textsc{MultiEdgeBHS($v,a_i$)}.\\
}
\Else
{
Stay at $u$ until $(u,v)$ reappears, if no agent is waiting for $(u,v)$. Else, choose a different port and continue \textsc{MultiEdgeBHS($u,a_i$)}.\\
}
}
}
\ElseIf{a port $j$ at $u$ with $(0\circ A,0)$ exists}
{
\If{$A$ does not contain $a_i$}
{
If the edge exists, and no agent returns after one round, then terminate by concluding the node w.r.to the port $j$ is the black hole.\\

Else, if the edge is missing, then stay at $u$ until $(u,v)$ reappears, if no agent is waiting for $(u,v)$. Else, choose a different port and perform \textsc{MultiEdgeBHS($u,a_i$)}. \\
}
\Else
{
Choose a different port and perform \textsc{MultiEdgeBHS($u,a_i$)}.\\
}
}

\ElseIf{a port $j$ at $u$ with $(0\circ A,1)$ exists} 
{
\If{$A$ does not contain $a_i$}
{Move along $j$ to $v$ while updating $g_2(j)=g_2(j)\circ a_i$ at $u$ and perform \textsc{MultiEdgeBHS($v,a_i$)}.\\}
\Else
{
Choose a different port and perform \textsc{MultiEdgeBHS($u,a_i$)}.\\
}
}
\Else {Backtrack to a node $w$ with a port $j$, $g_1(j)\neq 1$ and perform \textsc{MultiEdge($w,a_i$)}.\\}

\end{algorithm2e}

\subsubsection{Correctness and Complexity:} In this section we analyze the correctness and complexity of our algorithm \ref{multiedge-psuedo}.

\begin{lemma}\label{atmost2}
    Given a dynamic cactus graph $\mathcal{G}$ with at most $k$ dynamic edges at any round $r$. Our algorithm, \textsc{MultiEdgeBHS()} ensures that at most 2 agents are stuck due to a missing edge at any round.
\end{lemma}
\begin{proof}
    An agent executing algorithm \ref{multiedge-psuedo} may encounter a missing edge in two possible ways:
    \begin{itemize}
        \item First, when the agent is performing \textit{cautious} walk.
        \item Second, when the agent is performing \textit{pebble} walk.
    \end{itemize}
    In either case, the agent remains stationary when it encounters a vacant missing edge (refer to Algorithms \ref{multipleAgent-k} and \ref{singleAgent-k}). Moreover, if any agent at a node $u$ finds another agent waiting due to a missing edge w.r.to a port $j$ (where $j\in\{0,1,\cdots,deg(u)-1\}$), then it either chooses a different available port or if there are no such available ports then the agent backtracks. Hence, either end of a missing edge can be occupied by at most 2 agents.\qed\end{proof}

\begin{lemma}\label{bh-2agent}
    In the worst case at most 2 agents are consumed by the black hole, while the agents are following the algorithm \textsc{MultiEdgeBHS()}.
\end{lemma}

\begin{proof}
By lemma \ref{uv-path}, it has been shown that if $v$ is the black hole node and the node $u$ is the $home$, then any path from $home$ to $v$ must either pass along $v_0$ or $v_1$, where $v_0$ and $v_1$ belong to the same cycle as $v$ in $\mathcal{G}$. This implies that an agent passing through these unexplored edges $(v_0,v)$ and $(v_1,v)$ cannot mark the nodes $v_0$ and $v_1$ to be safe, as they enter the black hole. So, any subsequent agent visiting either $v_0$ or $v_1$ finds that $g_2()$ value corresponding to the edges $(v_0,v)$ and $(v_1,v)$ are 0. As any cycle can have at most one missing edge at any round such that the underlying graph remains connected, the adversary can either disappear $(v_0,v)$ or $(v_1,v)$ at any round. So, any set of agents trying to visit these unsafe nodes while executing \textsc{MultiEdgeBHS()}, can find these nodes $v_0$ and $v_1$ unsafe, and can successfully locate the black hole node without any further agents entering the black hole. Hence, this guarantees that at most two agents can be consumed by the black hole.
\qed    
\end{proof}

\begin{lemma}\label{infinitecycle-k}
    Our algorithm \textsc{MultiEdgeBHS()} ensures that no agent enters an infinite cycle.
\end{lemma}
\begin{proof}
    Consider $C$ to be a cycle in $\mathcal{G}$ and $u$ is the node in $C$ along which an agent $a_i$ while executing Algorithm \ref{multiedge-psuedo} traverses to an adjacent node $v\in C$ via the port $i$. Now, $a_i$ may be traversing with other agents performing \textit{cautious} walk (in that case $g_1(i)=g_1(i)\circ A'$ refer to step 28 of Algorithm \ref{multipleAgent-k}) or traversing alone performing \textit{pebble} walk (in this case $g_1(i)=g_1(i)\circ a_i$ refer to step 19 of Algorithm \ref{singleAgent-k}). In either case, $a_i$ after exploring $C$ whenever reaches $u$ and again decides to travel along $i$, then with the help of $g_1(i)$ at the whiteboard, identifies that it has already traversed this port before. In this situation, it decides not to take this port, and either chooses another available port or backtrack (refer to step 25-26 of Algorithm \ref{multipleAgent-k} when $a_i$ is performing \textit{cautious} walk, otherwise step 12-16 of Algorithm \ref{singleAgent-k}).\qed\end{proof}

    \begin{lemma}\label{explore-k}
        \textsc{MultiEdgeBHS()} ensures that any agent which is not stuck due to a missing edge can explore the remaining graph until it either enters the black hole or detects it.
    \end{lemma}
\begin{proof}
     As stated, each agent is a $t$-state finite automata, where $t \ge \alpha n\log\Delta$. An agent while executing either during \textit{cautious} or \textit{pebble} walk, performs the $t-$\textsc{Increasing-DFS} algorithm, which in turn ensures that any static graph of diameter at most $n$ and maximum degree $\Delta$ can be explored with $O(n\log\Delta)$ bits of memory (refer to Theorem 6 and to Corollary 7 in \cite{fraigniaud2005graph}). Now, in this situation of dynamic graph, the whiteboard in addition helps the agent in determining the ports it has yet to visit from a node ($g_1(i)=\bot$ or $0\circ A$, where $A$ does not contain $a_i$, refers that $i$-th port not visited by $a_i$). Moreover, the whiteboard also restricts the agent from entering in an infinite cycle loop (refer to Lemma \ref{infinitecycle-k}). So, an agent which is not stuck due to a missing edge, can explore each node of the underlying graph until it either enters the black hole or detects the black hole.\qed\end{proof}


\begin{theorem}\label{agentkedge}
    Given a dynamic cactus graph $\mathcal{G}$ with at most $k$ dynamic edges at any round. Our algorithm \textsc{MultiEdgeBHS()} ensures that it requires at most $2k+3$ agents, to successfully locates the black hole position. 
\end{theorem}
\begin{proof}
    By lemma \ref{atmost2}, we have shown that at most 2 agents are stuck due to a dynamic edge. Now at any round, there can be at most $k$ dynamic edges, which in turn implies that at most $2k$ agents can be stuck due to these dynamic edges. Moreover by lemma \ref{bh-2agent}, it is shown that at most 2 agents get eliminated by the black hole. Moreover, lemma \ref{explore-k}, ensures that any agent which is not stuck can explore the graph until it either detects the black hole or gets eliminated by it. Hence, the remaining agent among $2k+3$ agents can traverse along the graph unobstructively, and whenever it finds a vertex adjacent to a black hole, along which a port is marked unsafe (i.e. $g_2()$ value with respect to a port is $0$) in whiteboard, it waits for one round and if no agent returns then it correctly concludes that the node w.r.to the unsafe port is the black hole position.
\qed\end{proof}

\begin{theorem}\label{multi-complexity}
   The team of agents $\mathcal{A}=\{a_1,a_2,\cdots,a_{2k+3}\}$ following \textsc{MultiEdgeBHS()}, locates the black hole in a dynamic cactus graph $\mathcal{G}$ with at most $k$ dynamic edges at any round with $O(kn)$ rounds and in $O(k^2n)$ moves.
\end{theorem}

\begin{proof}
Consider a graph with $k$-cycles, $C_1,C_2,\cdots,C_k$, each cycle can have at most one dynamic edge at a round, such that the graph remains connected. Now, we analyze the complexity with the help of the following cases based on the size of $C_i$'s, $1\le i \le k$.
\begin{itemize}
    \item \textit{Case-1:} Suppose $|C_i|\le \frac{n}{k}, ~\forall ~1 \le i \le k$, consider a cycle $C_1$ with the nodes $u,v,v',w,w'\in C_1$. Let us assume, the set of $2k+3$ agents enter $C_1$ from $u$ and moves along a clockwise direction and encounters a missing edge $(v,v')$, which separates $a_1$ (say) from $\mathcal{A}$. In this scenario, the algorithm \textsc{MultiEdgeBHS()} instructs $a_2$ to remain stationary at $v$, whereas the remaining agents leave $a_1$ and $a_2$ and start moving in a counter-clockwise direction, either to \textit{trace} or to \textit{explore}. Now, in the meantime suppose the adversary revives the edge $(v,v')$, which reunites $a_1$ and $a_2$ and they continue to move forward, whereas disappears another edge $(w,w')$ in the counter-clockwise direction, which separates further two agents $a_3$ and $a_4$ (say) in a similar manner. Now, in order to separate these 4 agents from $\mathcal{A}$, the adversary requires $O(\frac{n}{k})$ rounds. Continuing in this manner, in order to separate $2k+3$ agents by reappearing and disappearing edges in $C_1$, it takes $O(k\frac{n}{k})=O(n)$ rounds. Hence, in order to explore each of $k$ such cycles, in the worst case $O(kn)$ rounds are required. Moreover, in each round at most $2k+3$ agents move, hence in the worst case $O(k^2n)$ moves are required.
    \item \textit{Case-2:} Suppose $|C_i|=3, ~\forall~ 2 \le i \le k$, and $|C_1|=n+2-3k$. As discussed in the earlier case it requires $O(k(n+2-3k))$ round to separate $2k+3$ agent in $C_1$, whereas since the other cycles are of size 3, it takes an additional $O(k)$ rounds to perform the same task in the remaining $k-1$ cycles. So, in general, it requires $O(k(n-2k))$ rounds to explore $\mathcal{G}$ in this case. Moreover, at most $2k+3$ agents move in each round, hence in the worst case $O(k^2(n-2k))$ moves. 
\end{itemize}
So, we analyze the two extreme possibilities, from the above cases, and conclude that in the worst case the algorithm  \textsc{MultiEdgeBHS()} requires $O(kn)$ rounds and $O(k^2n)$ moves to locate the black hole.\qed\end{proof}

\section{Conclusion}\label{conclusion}
In this paper, we studied the black hole search problem in a dynamic cactus for two types of dynamicity. We propose algorithms and lower bound and upper bound complexities in terms of number of agents, rounds and moves in each case of dynamicity. First, we studied at most one dynamic edge case, where we showed with 2 agents it is impossible to find the black hole, and designed a black hole search algorithm for 3 agents. Our algorithm is tight in terms of number of agents. Second, we studied the case when at most $k$ edges are dynamic. In this case, also we propose a black hole search algorithm with $2k+3$ agents. Further, we propose that it is impossible to find the black hole with $k+1$ agents in this scenario. A future work is to design an algorithm which has a tight bound in terms of number of agents when the underlying graph has at most $k$ dynamic edges. Further, it will be interesting to find an optimal algorithm in terms of complexity in both cases of dynamicity.

\bibliographystyle{splncs04}
\bibliography{bibliog.bib}

\begin{thebibliography}{10}
\providecommand{\url}[1]{\texttt{#1}}
\providecommand{\urlprefix}{URL }
\providecommand{\doi}[1]{https://doi.org/#1}

\bibitem{agarwalla2018deterministic}
Agarwalla, A., Augustine, J., Moses~Jr, W.K., Madhav, S.K., Sridhar, A.K.:
  Deterministic dispersion of mobile robots in dynamic rings. In: Proceedings
  of the 19th International Conference on Distributed Computing and Networking.
  pp.~1--4 (2018)

\bibitem{balamohan2011time}
Balamohan, B., Flocchini, P., Miri, A., Santoro, N.: Time optimal algorithms
  for black hole search in rings. Discrete Mathematics, Algorithms and
  Applications  \textbf{3}(04),  457--471 (2011)

\bibitem{chalopin2011black}
Chalopin, J., Das, S., Labourel, A., Markou, E.: Black hole search with finite
  automata scattered in a synchronous torus. In: Distributed Computing: 25th
  International Symposium, DISC 2011, Rome, Italy, September 20-22, 2011.
  Proceedings 25. pp. 432--446. Springer (2011)

\bibitem{czyzowicz2006complexity}
Czyzowicz, J., Kowalski, D., Markou, E., Pelc, A.: Complexity of searching for
  a black hole. Fundamenta Informaticae  \textbf{71}(2-3),  229--242 (2006)

\bibitem{czyzowicz2007searching}
Czyzowicz, J., Kowalski, D., Markou, E., Pelc, A.: Searching for a black hole
  in synchronous tree networks. Comb. Probab. Comput.  \textbf{16}(4),
  595–619 (2007)

\bibitem{das2021compacting}
Das, S., Di~Luna, G.A., Mazzei, D., Prencipe, G.: Compacting oblivious agents
  on dynamic rings. PeerJ Computer Science  \textbf{7} (2021)

\bibitem{di2020distributed}
Di~Luna, G., Dobrev, S., Flocchini, P., Santoro, N.: Distributed exploration of
  dynamic rings. Distributed Computing  \textbf{33},  41--67 (2020)

\bibitem{di2020gathering}
Di~Luna, G.A., Flocchini, P., Pagli, L., Prencipe, G., Santoro, N., Viglietta,
  G.: Gathering in dynamic rings. theoretical computer science  \textbf{811},
  79--98 (2020)

\bibitem{di2021black}
Di~Luna, G.A., Flocchini, P., Prencipe, G., Santoro, N.: Black hole search in
  dynamic rings. In: 2021 IEEE 41st International Conference on Distributed
  Computing Systems (ICDCS). pp. 987--997. IEEE (2021)

\bibitem{dobrev2006searching}
Dobrev, S., Flocchini, P., Prencipe, G., Santoro, N.: Searching for a black
  hole in arbitrary networks: Optimal mobile agents protocols. Distributed
  Computing  \textbf{19} (2006)

\bibitem{dobrev2007scattered}
Dobrev, S., Santoro, N., Shi, W.: Scattered black hole search in an oriented
  ring using tokens. In: 2007 IEEE International Parallel and Distributed
  Processing Symposium. pp.~1--8. IEEE (2007)

\bibitem{dobrev2008using}
Dobrev, S., Santoro, N., Shi, W.: Using scattered mobile agents to locate a
  black hole in an un-oriented ring with tokens. International Journal of
  Foundations of Computer Science  \textbf{19}(06),  1355--1372 (2008)

\bibitem{flocchini2012searching}
Flocchini, P., Kellett, M., Mason, P.C., Santoro, N.: Searching for black holes
  in subways. Theory of Computing Systems  \textbf{50},  158--184 (2012)

\bibitem{fraigniaud2005graph}
Fraigniaud, P., Ilcinkas, D., Peer, G., Pelc, A., Peleg, D.: Graph exploration
  by a finite automaton. Theoretical Computer Science  \textbf{345}(2-3),
  331--344 (2005)

\bibitem{gotoh2021exploratio}
Gotoh, T., Flocchini, P., Masuzawa, T., Santoro, N.: Exploration of dynamic
  networks: tight bounds on the number of agents. Journal of Computer and
  System Sciences  \textbf{122},  1--18 (2021)

\bibitem{gotoh2021exploration}
Gotoh, T., Sudo, Y., Ooshita, F., Kakugawa, H., Masuzawa, T.: Exploration of
  dynamic tori by multiple agents. Theoretical Computer Science  \textbf{850},
  202--220 (2021)

\bibitem{gotoh2020dynamic}
Gotoh, T., Sudo, Y., Ooshita, F., Masuzawa, T.: Dynamic ring exploration with
  (h, s) view. Algorithms  \textbf{13}(6), ~141 (2020)

\bibitem{ilcinkas2021exploration}
Ilcinkas, D., Wade, A.M.: Exploration of dynamic cactuses with sub-logarithmic
  overhead. Theory of Computing Systems  \textbf{65},  257--273 (2021)

\bibitem{kshemkalyani2020efficient}
Kshemkalyani, A.D., Molla, A.R., Sharma, G.: Efficient dispersion of mobile
  robots on dynamic graphs. In: 2020 IEEE 40th International Conference on
  Distributed Computing Systems (ICDCS). pp. 732--742. IEEE (2020)

\bibitem{kuhn2010distributed}
Kuhn, F., Lynch, N., Oshman, R.: Distributed computation in dynamic networks.
  In: Proceedings of the forty-second ACM symposium on Theory of computing. pp.
  513--522 (2010)

\bibitem{mandal2020live}
Mandal, S., Molla, A.R., Moses~Jr, W.K.: Live exploration with mobile robots in
  a dynamic ring, revisited. In: International Symposium on Algorithms and
  Experiments for Sensor Systems, Wireless Networks and Distributed Robotics.
  pp. 92--107. Springer (2020)

\bibitem{markou2012black}
Markou, E., Paquette, M.: Black hole search and exploration in unoriented tori
  with synchronous scattered finite automata. In: Principles of Distributed
  Systems: 16th International Conference, OPODIS 2012, Rome, Italy, December
  18-20, 2012. Proceedings 16. pp. 239--253. Springer (2012)

\end{thebibliography}
\end{document}